\DeclareMathAlphabet{\pazocal}{OMS}{zplmf}{m}{n}
\newcommand{\mcal}[1]{\pazocal{#1}}
\newcommand{\rulename}[1]{\ensuremath{\mbox{\textsc{#1}}}}
\newcommand{\qt}[1]{``{#1}"}
\newcommand{\abc}{\emph{AbC}\xspace}
\newlength{\arrow}
\newcounter{sqindex}
 \newcommand{\rom}[1]{ \textup{(\lowercase\expandafter{\romannumeral#1})}}
\newcommand \Until      {\mathbin{\mcal{U}\kern-.1em}}
\newcommand \Release     {\mathbin{\mcal{R}\kern-.1em}}
\newcommand \Since      {\mathbin{\mcal{S}\kern-.08em}}
\newcommand \g    {\mathsf{{G}\kern.08em}}
\newcommand \f    {\mathsf{{F}\kern.08em}}
\newcommand \UntilHat   {\mathbin{\LTLhat{\mcal{U}}\kern-.1em}}
\newcommand \SinceHat   {\mathbin{\LTLhat{\mcal{S}}\kern-.08em}}
\renewcommand \phi      {\varphi}
\newcommand \ltal       {\textsc{ltol}\xspace}
\newcommand{\set}[1]{\{{#1}\}}
\newcommand{\conf}[1]{\langle{#1}\rangle}
\newcommand{\toall}{\star}
\def\<#1>{\mathinner{\langle#1\rangle}}
\newcommand{\msf}[1]{\mathsf{#1}}
\newcommand{\rcp}{{{\rulename{ReCiPe}}}\xspace}
\newcommand{\comment}[1]{}
\newcommand{\typecvar}{{\scriptstyle@\msf{type}}}
\newcommand{\assigncvar}{{\scriptstyle@\msf{asgn}}}
\newcommand{\readycvar}{{\scriptstyle@\msf{rdy}}}
\newcommand{\lnkcvar}{{\scriptstyle@\msf{lnk}}}
\newcommand{\np}[1]{\todo[author=NP, color=green!40]{#1}}
\newcommand{\listen}{\rulename{ls}}
\newcommand{\succi}[1]{\rightarrow_{\hspace*{-2pt}_{\leq_{#1}}}}
\newcommand{\comrel}{\rightarrow_c}
\newcommand{\comreli}[1]{\rightarrow_{c_{#1}}}
\newcommand{\intrel}{\rightarrow_i}
\newcommand{\hist}{\textbf{hist}}
\newcommand{\comp}{\textbf{comp}}
\newcommand{\leadsTo}[1]{\ensuremath{{\bf |}{#1}{\bf \rangle}}}
\newcommand{\pre}[1]{{\mkern-1mu ^\bullet {#1}}\xspace}
\newcommand{\inhpre}[1]{{\mkern-1mu ^\circ {#1}}\xspace}
\newcommand{\preT}{\pre{t}}
\newcommand{\inhpreT}{\inhpre{t}}
\newcommand{\post}[1]{{{#1}^\bullet}\xspace}
\newcommand{\postT}{\post{t}}
\newcommand{\Act}{\mathsf{Act}}
\newcommand{\lpo}{\ensuremath{\mathsf{LPO}}\xspace}
\newcommand{\glpo}{\ensuremath{\mbox{g-}\mathsf{LPO}}\xspace}
\newcommand{\lpos}{{{\lpo}s}\xspace}
\newcommand{\glpos}{{{\glpo}s}\xspace}
\newcommand{\incomparable}{\#}
\newcommand{\Amc}{\ensuremath{\mathcal{A}}\xspace}
\newcommand*{\glue}{%
  \multimap
}
\newif\iflisten\listentrue
\begin{document}
\title{Interleaving \& Reconfigurable Interaction: Separating Choice from
Scheduling using Glue\thanks{This work is funded by
the ERC consolidator grant
D-SynMA (No. 772459) and the Swedish research council
grants: SynTM (No.
2020-03401) and VR project (No. 2020-04963).}
}
\titlerunning{Separating Choice from Scheduling using Glue}
%
\author{
	Yehia Abd
	Alrahman\inst{1}
	\and
	Mauricio
	Martel\inst{1}
	\and
	Nir Piterman\inst{1}
}
\authorrunning{Y. Abd Alrahman et al.}
%
\institute{University of Gothenburg, Gothenburg, Sweden\\
	\email{\{yehia.abd.alrahman,nir.piterman,mauricio.martel\}@gu.se}}
\maketitle              

\begin{abstract}
Reconfigurable interaction induces another dimension of nondeterminism
in concurrent systems which makes it hard to reason about the different
choices of the system from a global perspective. Namely, (1) choices
that correspond to concurrent execution of independent events; and (2)
forced interleaving (or scheduling) due to reconfiguration.
Unlike linear order semantics of computations, partial
order semantics recovers information about the
interdependence among the different events for fixed interaction,
but still is unable to handle reconfiguration.
We introduce \emph{glued partial orders} as a way to capture
reconfiguration. Much like partial orders capture all possible choices
for fixed systems, glued partial orders capture all possible choices
alongside reconfiguration.
We show that a glued partial order is sufficient to correctly capture
all partial order computations that differ in forced interleaving
due to reconfiguration. Furthermore, we show that
computations belonging to different glued partial orders are only
different due to non-determinism.

\end{abstract}

\section{Introduction}
Reconfigurable concurrent
systems~\cite{info,DBLP:conf/atal/AlrahmanPP20,scp} are a class of
computational systems, consisting of a collection of processes (or
agents) that interact and exchange information in nontrivial
ways. Agents interact using message-passing~\cite{pi1} (or
token-passing~\cite{pnarc}) and based on dynamic notions of
connectivity where agents may only observe, inhibit or participate in
interactions happening on links they are connected to. Agents may get
connected or disconnected to links as side-effects of the interaction,
and thus providing dynamic and sophisticated scoping mechanisms of
interaction through reconfigurable interfaces.

Reconfiguration induces another dimension of nondeterminism in
concurrent systems where it becomes hard to reason about the different
choices of the system from a global perspective. It creates a
situation where some events must be ordered with respect to
\emph{sequences of other events} dynamically during execution, and
thus forcing interleaving in a non-trivial way. That is, from the
point of an event, a sequence of other events is considered as a
single block and can only happen before or after it.
Note that reconfiguration is an internal event, and is totally hidden
from the perspective of an external observer~\cite{observ} who may
only observe message-/token- passing.
Indeed, messages or tokens can only indicate the occurrence of
exchange but cannot help with noticing that a reconfiguration has
happened and what are the consequences of reconfiguration.
Knowing the reason why some event is scheduled before some others and
the causal dependencies among the different events is crucial to
facilitate reasoning about specific internal aspects from a global
perspective~\cite{BestD90}.
It also becomes very relevant when applying correct-by-construction
techniques~\cite{synth} to synthesise such systems.

Clearly, linear order semantics of computations~\cite{BaetenB01,Vogler02a}
cannot be used to globally distinguish a system choice due to
concurrent execution of independent events and a forced interleaving
due to reconfiguration. It cannot be even used to recover information
about the participants of an event and the interdependence of the
different events. Therefore, a partial order semantics of computations
is in-order. Existing approaches to partial order semantics
(cf. \emph{Process semantics} of Petri nets~\cite{PetriR08,meseguer1992semantics,Vogler02a}
and \emph{Mazurkiewicz traces} of Zielonka
automata~\cite{Zielonka87,GenestGMW10,KrishnaM13}) proved useful in recovering
information about  the participants of events and independence of
concurrent events. For instance, in the Process semantics of Petri nets,
two concurrent events can be executed in any order or even
simultaneously, and thus we can distinguish concurrent execution from
mere nondeterminism. However, these formalisms have fixed interaction
structures that define interdependence of events in a static way, and
thus leads to a straightforward partial order semantics. Indeed, while
the interdependence of events is statically defined based on the
structure of a Petri net, it is also defined based on the domains of
events of Zielonka automata which are fixed in advance.

In this paper, we propose a partial order semantics of computations
under reconfiguration. In such settings, dependencies among events
emerge dynamically as side-effects of interaction, and thus we handle
these emergencies while ensuring that the semantics defines the actual
behaviour of the system. Our approach consists of characterising
reconfiguration points and their corresponding scheduling decisions in
a single structure, while preserving a \emph{true}-concurrent
execution of independent events. Our semantics allows reasoning about
the individual behaviour of agents composing the system and their
interaction information.
We test our results on \emph{Petri net with inhibitor
arcs} (PTI-nets)~\cite{FlynnA73,pnarc} and \emph{Channeled Transition Systems}
(CTS)~\cite{alrahman2021modelling,DBLP:conf/atal/AlrahmanPP20}.
These modelling frameworks
cover a wide range of interaction capabilities alongside
reconfiguration from two different schools of concurrency. In fact,
inhibitor arcs add a restricted form of reconfiguration to Petri nets
while CTS can be considered as a generalisation of Zielonka automata,
supporting rich interactions alongside reconfiguration.

\noindent
{\bf Contributions.} We define  specialised partial orders, that we
call \emph{labelled partial orders} ($\mathsf{LPO}$ for short), to
represent computations. An $\mathsf{LPO}$ is a representation of a
specific computation. That is, given a system consisting of a set of
agents, we can construct an $\mathsf{LPO}$ by only considering the
local views of individual agents and their interaction information. An
$\mathsf{LPO}$ defines how the individual computations of agents are
related, and also how different events are related.
In the spirit of Mazurkiewicz traces, the states of
different agents are (strictly) incomparable, that is there is no
notion of a global state.  This way we can easily single out finite
sequences of computation steps where an agent or a (small) group of
agents execute independently. We can also distinguish individual
events from joint ones. Despite the fact that an $\mathsf{LPO}$ may
refer to reconfiguration points, it cannot fully characterise
reconfiguration in a single structure. For this reason, we introduce
\emph{glued labeled partial orders} (g-$\mathsf{LPO}$, for short),
that is an extension of $\mathsf{LPO}$ with \emph{glue} to separate a
non-deterministic choice from forced scheduling due to
reconfiguration. Intuitively, two elements are glued from the point of
view of another element if they both happen either before or after
said element.
We show that a g-$\mathsf{LPO}$ is sufficient to represent
$\mathsf{LPO}$ computations that  differ in scheduling due to
reconfiguration. We also show that $\mathsf{LPO}$ computations
belonging to different g-$\mathsf{LPO}$(s) are different due to
nondeterministic selection of independent events.

The paper is organised as follows:
In Sect.~\ref{sec:informal}, we informally
present our partial order semantics and in
Sect.~\ref{sec:background}, we introduce the necessary background.
In Sect.~\ref{sec:sem}, we provide $\mathsf{LPO}$ semantics for PTI-nets
 and CTSs.
In Sect.~\ref{sec:glue} we define glued partial orders and the
corresponding extension to both PTI-nets and CTSs. We show, for
both, that every $\mathsf{LPO}$ computation is only a refinement of some
g-$\mathsf{LPO}$ of the same system.
In Sect.~\ref{sec:separation} we prove important results on
g-$\mathsf{LPO}$ with respect to reconfiguration and
nondeterminism. In Sect.~\ref{sec:conc} we present concluding remarks,
related works, and future directions.
All proofs are included in the appendix.

\section{Labelled Partial Order Computations in a  Nutshell}
\label{sec:informal}
In this section, we use a fragment of a PTI-net to informally illustrate the \lpo semantics under reconfiguration
and the idea behind \glpo.

We consider the PTI-net in Fig.~\ref{fig:petriex}(a), where we
interpret reconfiguration and concurrency in the following way: each
token represents an individual agent and the structure of the net
defines the combined behaviour.
The places of the net, denoted by circles, define the states of the
different agents  during execution.
The transitions, denoted by squares, can either refer to synchronisation
points (e.g., $t_1$ and $t_2$) or individual computation steps (e.g.,
$t_3$ and $t_4$).

Arrows define which places require to have tokens to enable a transition
and the places to put tokens after firing.
In our examples all arrows consume/produce one token.
For instance, transition $t_1$ may fire when
there is at least one token in both $p_1$ and $p_2$.
Transition firing induces removal of tokens
from input places and addition of tokens in output places.
Thus, when $t_1$ fires, one token is removed from $p_1$ and one from
$p_2$ and one token is placed in $p_3$ and $p_4$, each.
Sometimes a place can choose nondeterministically which transition
to participate in (e.g., $p_4$ chooses $t_2$ or $t_3$).
A place can inhibit the firing of some transition (e.g.,
$p_3$ inhibiting $t_4$) using an inhibitor arc ($p_3 \glue t_4$).
While the place contains a token it inhibits the transition.
We interpret this as the agent represented by the token (e.g., in $p_3$)
starting to listen  to the transition ($t_4$), but it cannot
participate, and thus it inhibits its execution.
In our example, in $p_1$ the agent is not listening to $t_4$, but once
$t_1$ is executed the agent reconfigures its interaction interface and
starts listening.
This means that $t_4$ may only fire either before a token is placed in
$p_3$ or after the token is removed.
Clearly, this can only happen when $t_4$ either happens before $t_1$ or
after $t_2$.
Thus from the point of view of $t_4$ both
$t_1$ and $t_2$ are considered as a single block, and their execution
cannot be interrupted.
Namely, the only viable sequences of execution (in case $t_2$ is
scheduled later) are $t_4,{\color{red}t_1,t_2}$ or
${\color{red}t_1,t_2},t_4$.
Note that this is only from the point of view of $t_4$ and has no
implications for other transitions.
Indeed, other transitions can have a different point of view (e.g.,
$t_3$).
This creates a forced interleaving in a non-trivial way due to the
occurrence of non-observable events (i.e., reconfiguration) that we
cannot reason about from a global perspective.
Furthermore, these dependencies among events emerge dynamically
as side-effects of interaction, and thus put the correctness of
partial order semantics at stake.

To handle this issue, we introduce a partial order semantics of
computations under reconfiguration.
We handle the above mentioned emergences by characterising
reconfiguration points and their corresponding scheduling
decisions in a single structure, while preserving a
\emph{true}-concurrent execution of independent events.
Our semantics allows reasoning about the individual behaviour of agents
composing the system and their interaction information.

We illustrate our $\mathsf{LPO}$ and g-$\mathsf{LPO}$ semantics in
Fig.~\ref{fig:petriex}(b), which characterises all possible (maximal)
computations of the net.
Here, we use the arrow $\rightarrow$ to indicate a happen before relation.

The two figures succinctly encode \emph{three} possible \lpos:
\begin{inparaenum}[(i)]
\item
	the \lpo obtained from Fig.~\ref{fig:petriex}(b) left structure
	with the dashed arrow from $t_4$ to $t_1$;
\item
	the \lpo obtained from Fig.~\ref{fig:petriex}(b) left structure
	with the dashed arrow from $t_2$ to $t_4$; and
\item
	the \lpo obtained from Fig.~\ref{fig:petriex}(b)
	right structure with the dashed arrow from $t_4$ to $t_1$.
\end{inparaenum}
\lpos (i) and (ii) agree that the token in $p_4$ nondeterministically
chooses the transition $t_2$ while in (iii) the nondeterministic choice
is $t_3$.
All \lpos capture information about interaction and interdependence
among events.
Indeed, in all cases we see that both $p_1$ and $p_2$ synchronize
through the transition $t_1$.
Places that are not strictly ordered with respect to a common
transition are considered concurrent.
Thus, as in Mazurkiewicz traces there is no notion of a global state.
Notice that \lpos (i) and (ii) differ only in the forced interleaving
of $t_4$ with respect to the block $t_1,t_2$.

Notice that both \lpos (i) and (ii) have information both on
reconfiguration and nondeterminism, but each individually cannot be
used to distinguish the hidden reconfiguration.
In fact, $t_4\rightarrow t_1$ in (i) indicates that $t_4$ happened
before a reconfiguration caused by $t_1$, and $t_2\rightarrow t_4$ in (ii)
indicates that $t_4$ happened after the reconfiguration.
In (iii), due to the different nondeterminsitic choice, the only
possible case we have to consider is that of $t_4$ happening before
$t_1$.


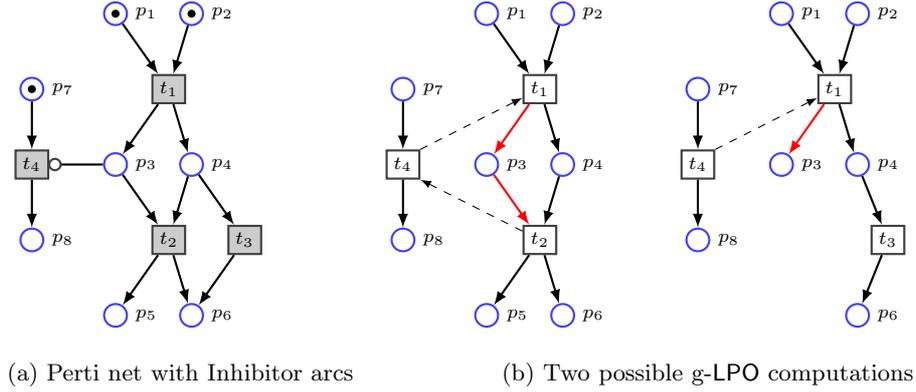
\begin{figure}[t]
	\centering
	\begin{tabular}{c@{\qquad\qquad}c@{\qquad\qquad}c}
	\scriptsize
	\begin{tikzpicture}[node distance=1cm]
		\tikzstyle{place}=[circle,thick,draw=blue!75,fill=white!20,minimum
		size=3mm]
		\tikzstyle{transition}=[rectangle,thick,draw=black!75,
		fill=black!20,minimum size=1mm]
		\tikzstyle{inhibitor}=[circle,thick,scale=0.2mm,draw=black!75,
		fill=white]

		\node[place,tokens=1,label=right:$p_1$]
			(p1)  {};
		\node[place,tokens=1,label=right:$p_2$]
			(p2) [right of=p1] {};

	 	\node[transition]
			(t1) [below of=p1,xshift=7mm] {$t_1$};

		\node[place,label=right:$p_3$]
			(p3) [below of=t1,xshift=-7mm] {};
		\node[place,label=right:$p_4$]
			(p4) [right of=p3] {};

		\node[transition]
			(t2) [below of=p3,xshift=7mm]  {$t_2$};
		\node[transition]
			(t3) [right of=t2]  {$t_3$};

		\node[place,label=right:$p_5$]
			(p5) [below of=t2,xshift=-7mm] {};
		\node[place,label=right:$p_6$]
			(p6)  [right of=p5]  {};
		\node[place,tokens=1,label=right:$p_7$]
			(p7) [left of=t1,xshift=-.8cm] {};
			\node[transition]
			(t4) [left of=p3,xshift=-.1cm]  {$t_4$};
		\node[place,label=right:$p_8$]
			(p8) [below of=t4]  {};

		\node[inhibitor]
			(inhT) [left of=p3,xshift=-.4cm] {};

		\draw[-latex,thick] (p1) -- node[above] {} (t1);
		\draw[-latex,thick] (p2) -- (t1);
		\draw[-latex,thick] (t1) -- (p3);
		\draw[-latex,thick] (t1) -- node[above] {} (p4);
		\draw[-latex,thick] (p3) -- (t2);
		\draw[-latex,thick] (p4) -- node[above] {} (t2);
		\draw[-latex,thick] (t2) -- node[above] {} (p5);
		\draw[-latex,thick] (t2) -- (p6);
		\draw[-latex,thick] (p7) -- (t4);
		\draw[-latex,thick] (t4) -- (p8);
		\draw[-latex,thick] (p4) -- (t3);
		\draw[-latex,thick] (t3) -- (p6);

		\draw[thick] (inhT) -- (p3);

	\end{tikzpicture} &\quad
	\scriptsize
	\begin{tikzpicture}[node distance=1cm]
		\tikzstyle{place}=[circle,thick,draw=blue!75,fill=white!20,minimum
		size=3mm]
		\tikzstyle{transition}=[rectangle,thick,draw=black!75,
		fill=white,minimum size=1mm]
		\tikzstyle{inhibitor}=[circle,thick,scale=0.2mm,draw=black!75,
		fill=white]

		\node[place,label=right:$p_1$]
			(p1)  {};
		\node[place,label=right:$p_2$]
			(p2) [right of=p1] {};

	 	\node[transition]
			(t1) [below of=p1,xshift=7mm] {$t_1$};

		\node[place,label=right:$p_3$]
			(p3) [below of=t1,xshift=-7mm] {};
		\node[place,label=right:$p_4$]
			(p4) [right of=p3] {};

		\node[transition]
			(t2) [below of=p3,xshift=7mm]  {$t_2$};

		\node[place,label=right:$p_5$]
			(p5) [below of=t2,xshift=-7mm] {};
		\node[place,label=right:$p_6$]
			(p6)  [right of=p5]  {};
		\node[place,label=right:$p_7$]
			(p7) [left of=t1,xshift=-.8cm] {};
			\node[transition]
			(t3) [left of=p3,xshift=-.1cm]  {$t_4$};
		\node[place,label=right:$p_8$]
			(p8) [below of=t3]  {};

		\draw[-latex,thick] (p1) -- node[above] {} (t1);
		\draw[-latex,thick] (p2) -- (t1);
		\draw[-latex,thick,color=red] (t1) -- (p3);
		\draw[-latex,thick] (t1) -- node[above] {} (p4);
		\draw[-latex,thick,color=red] (p3) -- (t2);
		\draw[-latex,thick] (p4) -- node[above] {} (t2);
		\draw[-latex,thick] (t2) -- node[above] {} (p5);
		\draw[-latex,thick] (t2) -- (p6);
		\draw[-latex,thick] (p7) -- (t3);
		\draw[-latex,thick] (t3) -- (p8);
		\draw[-latex,dashed] (t3.north east) -- (t1);
	\draw[-latex,dashed] (t2) -- (t3.south east);
	\end{tikzpicture} &\hspace{-5mm}
	\scriptsize
	\begin{tikzpicture}[node distance=1cm]
		\tikzstyle{place}=[circle,thick,draw=blue!75,fill=white!20,minimum
		size=3mm]
		\tikzstyle{transition}=[rectangle,thick,draw=black!75,
		fill=white,minimum size=1mm]
		\tikzstyle{inhibitor}=[circle,thick,scale=0.2mm,draw=black!75,
		fill=white]

		\node[place,label=right:$p_1$]
			(p1)  {};
		\node[place,label=right:$p_2$]
			(p2) [right of=p1] {};

	 	\node[transition]
			(t1) [below of=p1,xshift=7mm] {$t_1$};

		\node[place,label=right:$p_3$]
			(p3) [below of=t1,xshift=-7mm] {};
		\node[place,label=right:$p_4$]
			(p4) [right of=p3] {};

		\node[transition]
			(t2) [below of=p3,xshift=14mm]  {$t_3$};

		\node[place,label=right:$p_6$]
			(p6)  [right of=p5]  {};
		\node[place,label=right:$p_7$]
			(p7) [left of=t1,xshift=-.8cm] {};
			\node[transition]
			(t3) [left of=p3,xshift=-.1cm]  {$t_4$};
		\node[place,label=right:$p_8$]
			(p8) [below of=t3]  {};

		\draw[-latex,thick] (p1) -- node[above] {} (t1);
		\draw[-latex,thick] (p2) -- (t1);
		\draw[-latex,thick,color=red] (t1) -- (p3);
		\draw[-latex,thick] (t1) -- node[above] {} (p4);
		\draw[-latex,thick] (p4) -- node[above] {} (t2);
		\draw[-latex,thick] (t2) -- (p6);
		\draw[-latex,thick] (p7) -- (t3);
		\draw[-latex,thick] (t3) -- (p8);
		\draw[-latex,dashed] (t3.north east) -- (t1);
	\end{tikzpicture}

	\end{tabular}\\[2ex]
	\begin{tabular}{c@{\qquad\quad}c@{\qquad\quad}c}
(a) Perti net with Inhibitor arcs && (b) Two possible g-$\mathsf{LPO}$
computations
	\end{tabular}
	\caption{Petri net with inhibitor arcs}
	\label{fig:petriex}
\end{figure}

This suggests that we can actually isolate reconfiguration from
nondeterminism by using a more sophisticated structure than
$\mathsf{LPO}$, and thus expose the difference in a way that allows
reasoning about these hidden events from a global perspective. For this
reason, we define g-$\mathsf{LPO}$ computations, that are an extension
of $\mathsf{LPO}$ with a notion of \emph{glue}.

For PTI-nets like fixed systems, a g-$\mathsf{LPO}$  
simply drops strict ordering of events
with respect to each other (like $t_4\rightarrow t_1$ or $t_2\rightarrow t_4$), and
instead assigns each event a (possibly empty) glue relation defining
the \emph{glued} elements from the point of view of that event.
The glue relation is defined based on reconfiguration points, and in
case of Petri nets is based on inhibitor arcs.
We will see later how this is defined in a more dynamic and
compositional model like CTS, where structural information does not
simply exist. There, the g-$\mathsf{LPO}$  has to account also to
event-to-event ordering when sharing the same communication channel.

Consider now the structures in Fig.~\ref{fig:petriex}(b) without the
dashed arrows and, now, with an explanation of the red arrows.
These two structures are each a g-$\mathsf{LPO}$.
For the one on the left, since $p_3$ inhibits $t_4$ all existing
incoming and outgoing edges from $p_3$ are glued to $p_3$.
Thus, $t_4$'s glue relation includes these edges (in red).
All other transitions have empty glue relations because they are not
inhibited.
As they are not inhibited, their interdependence is well-captured
statically based on the structure of the net.
Note that the glue relation is not required to be transitive and the
glue only relates places and transitions.
In the structure on the right of the figure, $t_1$ is glued only to
$p_4$.
As $t_3$ is scheduled rather than $t_2$, then $p_3$ remains as a maximal
element.

As we show later, a single g-$\mathsf{LPO}$ can be used to characterise
reconfiguration and separate it from other sources of nondeterminism in
the system.
\section{Preliminaries: Labeled Partial Orders}
\label{sec:background}

\label{subsec:po and lpo}

We use partial orders to represent computations. We specialize notations
to match our needs.

A \emph{partial order} ($\mathsf{PO}$, for short) is a binary relation
$\leq$ over a set $O$ that is
reflexive, antisymmetric, and transitive.
We use $a<b$ for $a\leq b$ and $a\neq b$.
We use $a\incomparable b$ for $a\not\leq b$ and $b\not\leq a$, i.e., $a$
and $b$ are incomparable.

%

A \emph{labelled partial order} ($\mathsf{LPO}$, for short) is
$(O,\rightarrow_c,\rightarrow_i,\Sigma,\Upsilon,L)$,  where
$O=V\biguplus E$
is a set of elements partitioned to nodes and edges, respectively,
$\rightarrow_c$ and $\rightarrow_i$ are disjoint, anti-reflexive,
anti-symmetric, and non-transitive \emph{communication} and \emph{interleaving} order
relations over $O$. We have $\rightarrow_c\subseteq V\times E \cup E
\times V$ and $\rightarrow_i \subseteq E\times E$.
When $\rightarrow_i=\emptyset$ we omit it from the tuple.
The relation $\leq$ is the reflexive and transitive closure of the
union of $\rightarrow_c$ and $\rightarrow_i$.
We require that $\leq$ is a partial order.
Moreover, $\Sigma$ is a node alphabet, $\Upsilon$ is an edge
alphabet, and $L:O\rightarrow \Sigma\cup \Upsilon$ such that
$L(V)\subseteq \Sigma$ and $L(E)\subseteq \Upsilon$   is the labelling function.

Intuitively, elements in $V$ can denote states or execution histories of
individual agents and elements in $E$ denote transitions or events.
Thus, a history belongs to an individual agent and a
transition corresponds to either an individual computational step or
a synchronisation point among multiple agents.
The relation $\rightarrow_c$ captures participation in communication
and the relation $\rightarrow_i$ captures order requirements.

We denote $\rightarrow = \rightarrow_c \cup \rightarrow_i$.
Given an element $a\in O$ we write $\pre{a}$ for $\{ b ~|~
b\rightarrow a \}$ and $\post{a}$ for $\{b ~|~ a\rightarrow b\}$.
\section{LPO Semantics}
\label{sec:sem}
In this section, we present Petri Nets with inhibitor
arcs~\cite{FlynnA73,pnarc}
and Channeled Transition
Systems~\cite{alrahman2021modelling,DBLP:conf/atal/AlrahmanPP20} and we
provide each with a labelled partial order semantics.
The labelled partial order semantics of Petri nets extends occurrence
nets~\cite{meseguer1992semantics} with event-to-event connections that
allow to capture reconfigurations.
We include in appendix the labelled partial order semantics of asynchronous automata, which do
not require the relation $\intrel$, and, thus, show that the separation
of results in this paper only make sense in reconfigurable systems.

\begin{toappendix}
\subsection{Asynchronous automata}
\label{subsec:async aut}

We include the \lpo semantics of asynchronous automata.
As asynchronous automata do not have reconfigurations of
communication we only need the communication relation and do not use
the interleaving order relation.
This makes the notion of glue not relevant for asynchronous automata.

A \emph{process} is $P=(\mathsf{Act},S,s^0,\delta)$, where $\Act$ is a
finite
non-empty alphabet, $S$ is a finite and non-empty set of states,
$s^0\in S$ is an initial state, and $\delta \subseteq S \times \Act
\times S$.
We also write $\delta:S\times \Act\rightarrow 2^S$ when convenient.

A \emph{history} $h=s_0,\ldots, s_n$ is a finite sequence such that
$s_0=s^0$ and for
every $0\leq i <n$ we have $s_{i+1} \in \delta(s_i,a_i)$ for some
$a_i\in \Act$. The length of $h$ is $n+1$, denoted $|r|$.
For convenience, if $h_1=s_0,\ldots, s_n$ and $h_2=s_0,\ldots,
s_n,s_{n+1}$ such that $s_{n+1}\in \delta(s_n,a_n)$ we write $h_2\in
\delta(h_1,a_n)$ or $\delta(h_1,a_n,h_2)$.
We define $\hist(P)$ to be the set of histories of $P$.

A finite asynchronous automaton \Amc with $n$ processes is $\Amc =
(P_1,\ldots, P_n)$ such that each $P_i = (\Act_i, S_i,
s^0_i,\delta_i)$ is a process.
Let $\Act= \bigcup_{i}\Act_i$.

\begin{definition}[computation]
	\label{def:computation async aut}
A \emph{computation} of \Amc is an $\mathsf{LPO}$ $(O,\comrel,
\Sigma,\Upsilon,L)$, where $V \subseteq \bigcup_{i}\hist(P_i)$,
$\Sigma = V$, $L(h) = h$, and $\Upsilon=\Act$ such that:
\begin{enumerate}
\item
  The edge $e_\epsilon$ such that $L(e_\epsilon)=a$ for some $a$ is the
  unique minimal element according to $\leq$.
  For every $i$, we have $s^0_i \in V$ and $e_\epsilon\comrel s^0_i$.
\item
  If $h\in V$ there is a unique $e\in E$ such that $e\comrel h$. If
  $|h|>1$, there is also a unique $h'\in V$ such that  $h\in
  \delta(h',L(e))$ and $h'\comrel e$.
  \item
  For every $h\in V$ there is at most one $e\in E$ such that
   $h\comrel e$.
\item
  For every $e\in E$ there is $I \subseteq [n]$ such that all the
following hold:
\begin{enumerate}
\item
 $L_{E}(e) \in \bigcap_{i \in I}  \Act_i\setminus \bigcup_{i \notin I}
  \Act_i$
\item
  $\pre{e},\post{e} \in \bigcup_{i \in I} \hist(P_i)$
\item
  For every $i\in I$ we have $|\pre{e} \cap \hist(P_i)|=1$ and
  $|\post{e}\cap \hist(P_i)|=1$
\end{enumerate}
\end{enumerate}
\end{definition}
That is, a computation starts from an arbitrary joint edge that leads
to the initial states of all processes.
For every transition, the set of participating processes is all
those having the transition's label in their alphabet.
Each participating process has a history that is a predecessor of the
transition and a history that is a successor of the transition.
The pair of histories that belong to one process satisfy the transition
relation of that process.






\end{toappendix}

\subsection{Petri Nets with Inhibitor Arcs (PTI-nets)}
\label{subsec:petri nets}
A Petri net $N$ with inhibitor arcs is a bipartite directed graph $N =
\conf{P, T, F, I}$, where $P$ and $T$ are the set of places
and transitions such that $P \cap T = \emptyset$, $F:(P \times
T) \cup (T \times P) \rightarrow \mathbb{N}$ is the flow relation, and
$I\subseteq (P\times T)$ is the inhibiting relation.
We write $(s,s')\in F$ for $F(s,s')>0$.
{We restrict attention to Petri nets where all transitions have a
non-empty preset.}

\np{This paragraph can be rephrased to more general vector notation for
PNs rather than talking about states.}
The configuration of a Petri net at a time instant is
defined by means of a \emph{marking}.
Formally, let $N$ be a Petri net with a set of places
$P=\set{p_1,\dots,p_k}$.
A marking is a function $m: P \rightarrow \mathbb{N}$ and is defined as
a vector $\overline{m}=m[1],\dots,m[k]$ where $m[i]$ corresponds to the
number of tokens in $p_i$, for $i=1,\dots,k$.
Vectors can be added, subtracted, and compared in the usual way.
We assume some initial marking $m_0$.
For $p \in P$ let $\vec{p}$ be the singleton vector
$\vec{p}:P\rightarrow
\{0,1\}$ such that $\vec{p}(p)=1$ and $\vec{p}(p')=0$ for every
$p'\neq p$.

For a transition $t\in T$ we define the \emph{pre-vector} of $t$,
denoted by $\preT$, to represent the vector ${\preT}[1],
\ldots,{\preT}[k]$, where $\preT[i]=F(p_i,t)$.
Similarly, the \emph{post-vector} of $t$ is $\postT={\postT}[1],\ldots,
{\postT}[k]$, where $\postT[i]=F(t,p_i)$.

%
%
%
%
%

An inhibitor arc from a place to a transition means that the transition
can only fire if no token is on that place.
The inhibitor set of a transition t is the set $\inhpreT = \set{p \in P
	\mid (p, t) \in I}$, and represents the places to be ``tested for
absence'' of tokens.
That is, an inhibiting place allows to prevent the transition firing.

\np{next two paragraphs can be removed. Perhaps replaced by a
reference.}
A transition $t$ is enabled at $m$ if for every $p \in \preT$ we have
$m(p) \geq F(p,t)$ and all inhibitor places are empty, i.e., for every
$p \in \inhpreT$ we have $m(p) = 0$.
Note that if for some $t$ and $p\in\inhpreT$ we have $(p,t) \in F$ then
$t$ can never fire, thus it is called blocked.

A transition $t$ enabled at marking $m$ can fire and produce a new
marking $m'$ such that $m' = m - \preT + \postT$, denoted
$m\leadsTo{t}m'$.
That is, for every place $p \in P$, the firing transition $t$ consumes
$F(p,t)$ tokens and produces $F(t,p)$ tokens.

\begin{definition}[History]
	\label{def:pn history}
	We define the set of histories of a net $N$ by induction.

	We define a special transition $t_\epsilon$ such that
	$\post{t_\epsilon}=m_0$. The pair $(\emptyset,t_\epsilon)$ is a
	t-history. Note that $t_\epsilon$ is not a transition in $T$.

%
	For a place $p$, let $h=(S,t)$ be a t-history
	such that $\post{t}(p)>0$.
	Then we have $(h,p,\post{t}(p))$ is a p-history.
	That is, given a t-history $h$ ending in transition $t$, where $p$ is
	in $\post{t}$, then the combination of $h$, $p$, and the number of
	tokes that $t$ puts in $p$ form a p-history.

	Consider a transition $t\in T$.
	A t-history is a pair
	$(S,t)$, where $S=\{(h_1,i_1),\allowbreak\ldots, (h_n,i_n)\}$ is a multiset
	satisfying the following.
	For every $j$ we have $h_j=(-,p,c_j)$ is a p-history, where
	$c_j\geq i_j$ and $\pre{t} = \sum_j i_j \cdot \vec{p_j}$.
	That is, the t-history identifies the set of p-histories from which
	$t$ takes tokens with the multiplicity of tokens taken from every
	p-history.

	Let $\hist(N)$ be the set of all histories of $N$ partitioned to
	$\hist_p(N)$ and $\hist_t(N)$ in the obvious way.
	Given a t-history $h=(S,t)$ and a p-history $h'$ we write
	$h(h')$ for the number of appearances of $h'$ in the multiset $S$.
\end{definition}

Now, everything is in place to define the labelled partial order
semantics of a PTI-net.

\begin{definition}[\rulename{LPO}-computation]
	\label{def:lpo-computation}
	A \emph{computation} of $N$ is an $\mathsf{LPO}$
	$(O,\comrel,\intrel,\Sigma,\Upsilon,L)$, where
	$V \subseteq \hist_p(N)$, $E \subseteq \hist_t(N)$,
	$\Sigma=P$, $\Upsilon=T$, for a p-history $v=(-,p,i)$
	we have $L(v)=p$ and for a t-history $(S,t)$
	we have $L(e)=t$, and such that:
	\begin{enumerate}
		\item[N1.]
		The t-history $(\emptyset,t_\epsilon)$ is the unique minimal
		element according to $\leq$.
 		\item[N2.]
		For a p-history $v=(e,p,i)\in V$ we have $e\in E$ and $e$ is
		the unique edge such that $e\comrel v$.
		\item[N3.]
		For a p-history $v=(h,p,i)\in V$, let $e_1,\ldots, e_j$ be the
		t-histories such that $v\comrel e_j$.
		Then, for every $j$ we have $e_j(v)>0$ and $\sum_j e_j(v)\leq
		i$.
		That is, $v$ leads to t-histories that contain it with
		the multiplicity of $v$ being respected.
		\item[N4.]
		For every $e\in E$, where $e=(\{(v_1,i_1),\ldots, (v_n,i_n)\},t)$,
		all the following hold:
		\begin{enumerate}
			\item[(a)]
			$\pre{e}\cap V =\{v_1,\ldots, v_n\}$ and
			$\post{e}\cap V=\{(e,p,\post{t}(p)) ~|~ \post{t}(p)>0\}$.

			\item[(b)]
			For every $v\in V$ such that $L(v) \in \inhpre{L(e)}$
			we have $e\leq v$ or $v\leq e$.
			\item[(c)]
			If $e\intrel e'$ then there is some $v$ such that
			either
			(i) $v\comrel e$ and $(L(v),L(e'))\in I$
			or
			(ii) $e'\comrel v$ and $(L(v),L(e))\in I$.
		\end{enumerate}
	\end{enumerate}
\end{definition}

That is, a computation starts from the dummy transition $t_\epsilon$,
which establishes the initial marking.
Every other transition is a t-history that connects the p-histories
that it contains.
If a place inhibits a transition then either the transition happens
before a token arrives to the place or after the token left that place.
This is possible by adding direct interleaving dependencies
($\intrel$) between edges.
Namely, if $p$ inhibits $t$ then either $t$ happens before the
transition putting token in $p$ or after the transition taking the
token from $p$.

\subsection{Channelled Transition Systems (CTS)}
\label{subsec:cts}
A \emph{Channelled Transition System} (CTS) is a tuple of the form
$\mathcal{T}=\langle C, \Lambda,B,S, \allowbreak
S_0,R,L,\listen\rangle$,
where $C$ is a set of channels, including the broadcast channel
($\toall$), $\Lambda$ is a \emph{state alphabet}, $B$ is a
\emph{transition alphabet}, $S$ is a set of states, $s_0\in S$
is an initial state, $R\subseteq S\times B \times S$ is
a transition relation, $L:S\rightarrow \Lambda$ is a labelling
function, and $\listen:S \rightarrow 2^C$ is a channel-listening
function such that
for every $s\in S$ we have $\toall \in \listen(s)$.
That is, a CTS is listening to the broadcast channel in every state.
We assume that $B = B^+ \times \{!,?\} \times C$, for
some set $B^+$.
That is, every transition labeled with some
$b\in B$ is either a message send ($!$)
or a message receive ($?$) on some channel $c\in C$.

Given $(b^+,!,c)\in B$ we write $?(b^+,!,c)$ for $(b^+,?,c)$ and
$ch(b^+,{-},c)$ for $c$.
That is, $?(b)$ is the
corresponding receive transition of a send transition $b$ and $ch(b)$
is the channel of $b$.

For a receive transition $b=(b^+,?,c)$ and a state $s\in S$ we write
$s\rightarrow_b$ if $c\in
\listen(s)$ and there is some $s'$ such that $(s,b,s')\in R$.
That is, $s$ is listening on channel $c$ and can participate, i.e., has
an outgoing receive transition for $b$.
We write $s\not \rightarrow_b$ if $c\in\listen(s)$ and it is
not the case that $s\rightarrow_b$. That is, $s$ is listening on
channel $c$
and is not able to participate.

A \emph{history} $h=s_0,\ldots,s_n$ is a
finite sequence of states  such that $s_0\in S_0$
and for every $0\leq i <n$ we have that $(s_i,b_i,s_{i+1})\in R$ for
some $ b_i\in B$.
The length of $h$ is $n+1$, denoted $|h|$.
For convenience we generalise notations applying to states to apply to
histories.
For example, we write $c\in \listen(h)$ when $c\in
\listen(s_n)$, $h\rightarrow_b$ when $s_n\rightarrow_b$ and
$h\not\rightarrow_b$ for $s_n\not\rightarrow_b$.
Similarly, if $h=s_0,\ldots, s_n$ and
$h'= s_0,\ldots, s_n,s_{n+1}$ where $(s_n,b_n,s_{i+1})\in R$, we write
$(h,b_n,h') \in R$.
Let $\hist(\mathcal{T})$ be the set of all histories of $\mathcal{T}$.
An \emph{execution} $\pi=s_0,b_0,s_1\ldots$ is an infinite
sequence such that for every $i\geq 0$ we have
$(s_i,b_i,s_{i+1})\in R$ and $b_i\in B$.
Thus, every prefix of $\pi$ (projected on states) is a history.


The linear semantics for CTS is given by a parallel composition
operator over a set of CTSs.
We include the full definition in appendix and refer the
reader to \cite{alrahman2021modelling}.
Intuitively, multicast channels are blocking. All agents who are
listening to the channel must be able to participate in the
communication in order for a send to be possible.
The broadcast channel, on the other hand, is non-blocking.
Agents always listen to the broadcast channel.
However, if they cannot participate in a communication it still goes on
without them.

\begin{toappendix}
	\subsection{Composition for Channeled Transition Systems}
	We include the definition of the parallel composition operator over
	CTS.
	A parallel composition of CTSs is again a CTS.

	\begin{definition}[Parallel Composition]\label{def:par}
		Given two CTS
		$\mathcal{T}_i=\langle C_i, \Lambda_i,B_i,S_i, \allowbreak
		s^i_0,R_i,L_i,\listen^i\rangle$, where $i\in \{1,2\}$ their
		composition
		$\mathcal{T}_1\parallel \mathcal{T}_2$ is the following CTS
		$\mathcal{T}=\langle C, \Lambda,B,S, \allowbreak
		s_0,R,L,\listen\rangle$,
		where the components of $\mathcal{T}$ are:

		\noindent
		\begin{minipage}{0.5\textwidth}
			\begin{itemize}
				\item[]
				\item
				$C = C_1 \cup C_2$
				\item
				$B = B_1 \cup B_2$
				\item
				$s_0 = (s_0^1,s_0^2)$
				\item
				$\listen(s_1,s_2) = \listen^1(s_1)\cup \listen^2(s_2)$
			\end{itemize}
		\end{minipage}
		\begin{minipage}{0.5\textwidth}
			\begin{itemize}
				\item
				$\Lambda = \Lambda_1 \times \Lambda_2$
				\item
				$S = S_1\times S_2$
				\item
				$L(s_1,s_2) = (L_1(s_1),L_2(s_2))$
			\end{itemize}
		\end{minipage}\hfill

		\begin{itemize}
			\item
			$R = $
			{\small
				\[
				\begin{array}{r r}
					\left \{((s_1,s_2),(\upsilon,!,c),(s'_1,s'_2))
					\left |~
					\begin{array}{l r }
						(s_1,(\upsilon,!,c),s'_1) \in R_1,
						c\in \listen^2(s_2)  \mbox{ and }
						(s_2,(\upsilon,?,c),s'_2) \in R_2 & \mbox{ or}
						\\[2ex] 
						(s_1,(\upsilon,?,c),s'_1) \in R_1,c\in
						\listen^1(s_1), \mbox{ and }
						(s_2,(\upsilon,!,c),s'_2) \in R_2 & \mbox{ or}
						\\[2ex] 
						(s_1,(\upsilon,!,c),s'_1) \in R_1, c\notin
						\listen^2(s_2), \mbox{
							and }
						s_2=s'_2 & \mbox{ or} \\[2ex] 
						c\notin \listen^1(s_1), s_1=s'_1, \mbox{ and }
						(s_2,(\upsilon,!,c),s'_2) \in R_2 
					\end{array}
					\right . \right \} & \cup
					\\
					\left \{((s_1,s_2),(\upsilon,?,c),(s'_1,s'_2))
					\left |~
					\begin{array}{l r }
						c\in \listen^1(s_1), (s_1,(\upsilon,?,c),s'_1)
						\in R_1, c\in
						\listen^2(s_2)\\ \mbox{ and }
						(s_2,(\upsilon,?,c),s'_2) \in R_2 & \mbox{ or}
						\\[2ex] 
						(s_1,(\upsilon,?,c),s'_1) \in R_1, c\notin
						\listen^2(s_2), \mbox{
							and }
						s_2=s'_2 & \mbox{ or} \\[2ex] 
						c\notin \listen^1(s_1),  s_1=s'_1, \mbox{ and }
						(s_2,(\upsilon,?,c),s'_2) \in R_2 
					\end{array}
					\right . \right \}  & \cup\\
				\end{array}
				\]}
			\[
			\begin{array}{l r}\small
				\left
				\{((s_1,s_2),(\upsilon,\gamma,\toall),(s'_1,s'_2)) \left
				 |~
				\begin{array}{l r }
					\begin{array}{l r }
						\gamma\in\{ !,?\},
						(s_1,(\upsilon,\gamma,\toall),s'_1)\in R_1,
						s_2=s_2' \mbox{ and }\\
						\forall s_2'' ~.~
						(s_2,(\upsilon,?,\toall),s''_2) \notin R_2 &
						\hfill \mbox{ or} \\[2ex]
						\multicolumn{2}{l}{\gamma\in\{ !,?\}, s_1=s_1',
							\forall s_1'' ~.~
							(s_1,(\upsilon,?,\toall),s''_1) \notin R_1\
							\mbox{and}}\\
						(s_2,(\upsilon,\gamma,\toall),s'_2)\in R_2,
					\end{array}  
				\end{array}
				\right . \right \}
				\\[2ex]
			\end{array}
			\]

		\end{itemize}
	\end{definition}

	The transition relation $R$ of the composition defines two modes of
	interactions, namely multicast and broadcast.
	In both interaction modes, the composition $\mathcal{T}$ sends a
	message $(\upsilon,!,c)$ on channel $c$ (i.e.,
	$((s_1,s_2),(\upsilon,!,c),(s'_1,s'_2))\in R$) if either
	$\mathcal{T}_1$ or $\mathcal{T}_2$ is able to generate this message,
	i.e, $(s_1,(\upsilon,!,c),s'_1)\in R_1$ or
	$(s_2,(\upsilon,!,c),s'_2)\in R_2$.

	Consider the case of a multicast channel.
	A multicast is blocking. Thus, a multicast message is sent if either
	it is received or the channel it is sent on is not listened to.
	Suppose that a message originates from $\mathcal{T}_1$, i.e.,
	$(s_1,(\upsilon,!,c),s'_1)\in R_1$.
	Then, $\mathcal{T}_2$ must be able to
	either receive the message or, in the case that $\mathcal{T}_2$ does
	not listen to the channel, discard it.
	CTS $\mathcal{T}_2$ receives if $(s_2,(\upsilon,?,c),s'_2)\in R_2$.
	It discards if $c\notin\listen^2(s_2)$ and $s_2=s'_2$.
	The case of $\mathcal{T}_2$ sending is dual.
	Note that $\mathcal{T}_2$ might be a composition of other
	CTS(s), say $\mathcal{T}_2=\mathcal{T}_3\|\mathcal{T}_4$. In this
	case, $\mathcal{T}_2$ listens to channel $c$ if at least one of
	$\mathcal{T}_3$ or $\mathcal{T}_4$ is listening.
	That is, it could be that either
	$c\in(\listen(s_3)\cap\listen(s_4))$,
	$c\in(\listen(s_2)\backslash\listen(s_3))$, or
	$c\in(\listen(s_2)\backslash\listen(s_4))$.
	In the first case, both must receive the message.
	In the latter cases, the listener receives and the non-listener
	discards.
	Accordingly,
	when a message is sent by one system, it is propagated to all other
	connected systems in a joint transition.
	A multicast is indeed blocking because a connected system cannot
	discard an incoming message on a channel it is listening to.
	More precisely, a joint transition
	$((s_1,s_2),(\upsilon,!,c),(s'_1,s'_2))$ where $c\in\listen(s_2)$
	requires that $(s_2,(\upsilon,?,c),s'_2)$ is
	supplied. In other words, message sending is blocked until all
	connected receivers are ready to participate in the interaction.
	Clearly, the latter correspond to inhibition arcs in Petri nets.

	Consider now a broadcast.
	A broadcast is non-blocking.
	Thus, a broadcast message is either received or discarded.
	Suppose that a message originates from $\mathcal{T}_1$, i.e.,
	$(s_1,(\upsilon,!,\toall),s'_1)\in R_1$.
	If $\mathcal{T}_2$
	is receiving, i.e., $(s_2,(\upsilon,?,\toall),s'_2)\in R_2$ the
	message is sent.
	However,
	by definition, we have that $\toall\in\listen(s)$ for every $s$ in a
	CTS.
	Namely, a system may not disconnect the broadcast channel
	$\toall$.
	For this reason, the last part of the transition relation $R$
	defines a special case for handling (non-blocking)
	broadcast.
	Accordingly, a joint transition
	$((s_1,s_2),(\upsilon,\gamma,\toall),(s'_1,s'_2))\in R$ where
	$\gamma\in\set{!,?}$ is always possible and may not be blocked by
	any
	receiver.
	In fact, if ($\gamma=\ !$) and $(s_1,(\upsilon,!,\toall),s'_1)\in
	R_1$
	then the joint transition is possible whether
	$(s_2,(\upsilon,?,\toall),s'_2)\in R_2$ or not.
	In other words, a broadcast can happen even if there are no
	receivers.
	Furthermore, if ($\gamma=\ ?$) and
	$(s_1,(\upsilon,?,\toall),s'_1)\in R_1$ then also the joint
	transition
	is possible regardless of the other participants. In other words, a
	broadcast is received only by interested participants.
\end{toappendix}

\begin{figure}[bt]
	\centering
	\begin{tabular}{c@{\qquad\qquad\qquad}c@{\qquad\qquad}c}
		\scriptsize
		\begin{tikzpicture}[node distance=1.3cm]
			\tikzstyle{place}=[circle,thick,draw=blue,fill=white!20,minimum
			size=4mm]

			\node[place,label=right:$\set{\toall}$]
			(p7)  {1};

			\node[place,label=right:$\set{\toall}$]
			(p8) [below of=p7]  {2};

			\draw[-latex,thick] (p7) -- node[left] {$(v_1,!,c)$}(p8);

		\end{tikzpicture} &\quad
		\scriptsize
		\begin{tikzpicture}[node distance=1.3cm]
			\tikzstyle{place}=[circle,thick,draw=blue,fill=white!20,minimum
			size=4mm]

			\node[place,label=right:$\set{\toall}$]
			(p1)  {1};

			\node[place,label=right:$\set{\toall,c}$]
			(p3) [below of=p1] {2};

			\node[place,label=right:$\set{\toall}$]
			(p5) [below of=p3] {3};

			\draw[-latex,thick] (p1) -- node[left] {$(v_2,!,d)$} (p3);
			\draw[-latex,thick] (p3) -- node[left] {$(v_3,!,e)$} (p5);

		\end{tikzpicture} &\hspace{-5mm}
		\scriptsize
		\begin{tikzpicture}[node distance=1.3cm]
			\tikzstyle{place}=[circle,thick,draw=blue,fill=white!20,minimum
			size=4mm]

			\node[place,label=right:$\set{\toall,d}$]
			(p2)  {1};

			\node[place,label=right:$\set{\toall,e}$]
			(p4) [below of=p2] {2};

			\node[place,label=right:$\set{\toall}$]
			(p8) [below of=p4]  {3};
			\node[place,label=right:$\set{\toall}$]
			(p9) [right of=p8]  {4};
			\draw[-latex,thick] (p2) -- node[left] {$(v_2,?,d)$} (p4);
			\draw[-latex,thick] (p4) -- node[left] {$(v_3,?,e)$} (p5);
			\draw[-latex,thick] (p4) -- node[right] {$(v_4,!,b)$} (p9);

		\end{tikzpicture}

	\end{tabular}\\[2ex]
	\begin{tabular}{c@{\qquad\qquad\qquad\qquad}c@{\qquad\qquad\qquad}c}
		(a) Agent $\mathcal{T}_1$ &(b) Agent $\mathcal{T}_2$& (c) Agent
		$\mathcal{T}_3$

	\end{tabular}
	\caption{CTS representation of the running example.}
	\label{fig:cts}
\end{figure}
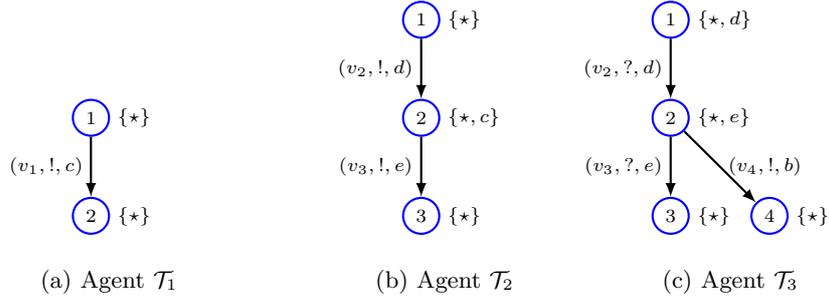

The PTI-net in Fig.~\ref{fig:petriex}(a) can be modelled as the
parallel composition of the CTSs in Fig.~\ref{fig:cts}, where we label
states with the listening function.
Starting from the initial states, we have that either $(v_1,!,c)$ or
$(v_2,!,d)$ can be sent. The former is an individual transition of
agent $\mathcal{T}_1$ while the latter is a joint transition between
$\mathcal{T}_2$ and $\mathcal{T}_3$ where $\mathcal{T}_2$ sends and
$\mathcal{T}_3$ receives. Note that $\mathcal{T}_3$ is initially
connected to channel $d$. If $(v_2,!,d)$ is scheduled first then the
listening function of both $\mathcal{T}_2$ and $\mathcal{T}_3$ is
reconfigured where $\mathcal{T}_2$ starts listening to channel $c$ and
$\mathcal{T}_3$ starts listening to $e$. This way, $(v_1,!,c)$ is
blocked until $(v_3,!,e)$ is sent.
It is not hard to see that a reconfiguration due to changes in the
listening function is equivalent to token passing. However, here we can
model a more interesting compositional interactions with meaningful
message exchange.

Now, everything is in place to define the labelled partial order
semantics of a CTS.
Consider a system $\mathcal{S}=\mathcal{T}_1 \parallel \cdots \parallel
\mathcal{T}_n$,
where $\mathcal{T}_i=\langle
C_i,\Lambda_i,B_i,S_i, S_0^i, R_i,L_i,\listen_i\rangle$.
We denote $C=\bigcup_i C_i$, and
$B=\bigcup_i B_i$.

\begin{definition}[\lpo-computation]
	\label{def:lpo-computation cts}
	A \emph{computation} of $\mathcal{S}$ is an $\mathsf{LPO}$
	$(O,\comrel,\intrel,\Sigma,\Upsilon,L)$, where
	$V\subseteq \bigcup_i \hist(\mathcal{T}_i)$, $\Sigma=V$,
	$\comrel = \rightarrow_s \biguplus \rightarrow_r$ is the
	disjoint union of the send and receive relations,
	$\Upsilon=\{(\upsilon,!,c)\in B\}$, and for $h\in V$ we have
	$L(h)=h$.
	In addition we require the following:
	\begin{enumerate}
		\item[C1.]
		The edge $e_\epsilon$ such that $L(e_\epsilon)=(b,!,\toall)$ is the
		unique minimal element according to $\leq$.
		For every $i$, we have $s^0_i \in V$ and $e_\epsilon\rightarrow_r
		s^0_i$.
		\item[C2.]
		If $h\in V\cap \hist(\mathcal{T}_i)$ there is
		a unique $e\in E$ such that $e\comrel h$.
		If $|h|>1$, there is also a unique $h'\in V$ such that $h'\comrel e$
		and either  $(h',L(e),h)\in R_i$ or $(h',?(L(e)),h)\in R_i$.
		\item[C3.]
		For every $h \in V$ there is at most one $e\in E$ such that
		$h \comrel e$.
		\item[C4.]
		For every $e\in E\setminus\{e_\epsilon\}$ there is $I\subseteq [n]$
		such that all the following hold:
		\begin{enumerate}[(a)]
			\item
			For every $i\in I$ we have $|\pre{e}\cap \hist(\mathcal{T}_i)|=1$
			and $|\post{e}\cap \hist(\mathcal{T}_i)|=1$.
			\item
			There is a unique $i\in I$ and $h,h'\in V\cap
			\hist(\mathcal{T}_i)$
			such that $(h,L(e),h')\in R_i$ and $h\rightarrow_s e
			\rightarrow_s h'$ and for every	$i' \in I \setminus\{i\}$ there
			are $h'',h'''\in V \cap \hist(\mathcal{T}_{i'})$ such that
			$h''\rightarrow_r e\rightarrow_r h'''$ and $(h'',?(L(e),h''')\in
			R_{i'}$.
			\item
			If $L(e)=(\upsilon,!,c)$ for $c\neq \toall$ then for every
			$h\in V$ such that
\iflisten
			$c \in \listen(h)$
\else
			$h \not\rightarrow_{?(L(e))}$
\fi
			we have $h\leq e$ or $e\leq h$.
			\item
			If $L(e)=(\upsilon,!,\toall)$ then for every $h\in V$ such
			that $h\rightarrow_{?(L(e))}$ we have $h\leq e$ or $e\leq
			h$.
		\end{enumerate}
		\item[C5.]
		For every $e\neq e'$ such that $ch(e)=ch(e')$
		we
		have $e\leq e'$ or $e'\leq e$.
		\item[C6.]
		If $e\intrel e'$ then there is some
		$h=s_0,\ldots,
		s_j$
		such that one of the following holds:
		\begin{enumerate}
			\item
			$ch(e)=ch(e')$.
		  \np{Do ii.-v.  need to be $\comrel$ or
			$\rightarrow_r$? }
			\item
			$L(e')=(\upsilon,!,c)$ for $c\neq \toall$,
			$h\comrel e$ and
\iflisten
			$ch(L(e')) \in \listen(h)$.
\else
			$h\not\rightarrow_{?(L(e'))}$.
\fi
			\item
			$L(e)=(\upsilon,!,c)$ for $c\neq \toall$,
			$e' \comrel h$ and
\iflisten
			$ch(L(e')) \in \listen(h)$.
\else
			$h\not\rightarrow_{?(L(e))}$.
\fi
			\item
			$L(e')=(\upsilon,!,\toall)$, $h \comrel e$
			and
			$h\rightarrow_{?(L(e'))}$.
			\item
			$L(e)=(\upsilon,!,\toall)$, $e'\comrel h$
			and
			$h\rightarrow_{?(L(e))}$.
		\end{enumerate}
	\end{enumerate}
\end{definition}

Note that an $\mathsf{LPO}$ computation relates histories
of individual CTSs, and thus allows to draw
relations among finite sequences of individual computation
steps of one CTS (or a group of CTSs) with respect to others;
Furthermore, a CTS is always listening to the broadcast channel,
and thus, it becomes mandatory to order broadcast messages that
enable/disable participation to each other.

More precisely, C1 ensures that a unique broadcast initiates all the
initial states of $\mathcal{T}_i$ for all $i$ and that nothing happens
before that.
As expected, C2 and C3 ensure that an \lpo
defines a unique resolution of a nondeterministic choice
in every single step.
Moreover, C4 models interactions, where (a) and (b) model
synchronisation while (c)-(f) model ordering due to schedule imposed by
using global resources and restrictions due to reconfiguration.
First, communications on the same channel must be ordered.
Then, a multicast must be ordered with respect to every
individual history that listens to it.
Furthermore,  a broadcast must be ordered with respect to every
individual history that can participate in it.
Clearly, the last two requirements are crucial to preserve the blocking
semantics of multicasts and the input enabledness of broadcasts.

Thus, for a multicast, if a history $h$ blocks the multicast execution
then either $h$ can be extended so that the multicast is released or
the multicast happens directly before $h$ is reached.
We solve this by adding a strict ordering between multicasts.
The same holds for a broadcast, but in this case we handle input
enabledness of broadcast rather.

We will use $\comp(\mathcal{S})$ for $\mathcal{S}$ being a
Petri net or CTS,
to denote the set of \lpo computations of $\mathcal{S}$.
\section{Partial Order with Glue}\label{sec:glue}

In this section we extend labeled partial orders with
\emph{glue}.
Intuitively, two elements are glued from the point of view of another
element if they both happen either before or after said element.

\begin{definition}[Glue relation]\label{def:glue}
  A \emph{Glue} over a set $O$ and a relation $\rightarrow_c\subseteq
  O\times O$ is a relation $R\subseteq \rightarrow_c$.
\end{definition}

Intuitively, a glue relation $R$ over the set $O$ and a relation
$\rightarrow_c$ defines pairs of elements that are glued
together.

\begin{definition}[Glued \lpo]\label{def:gluedpo}
A \emph{glued labeled partial order} (\glpo, for short) is
$\rulename{lpg}=(P,\mathcal{G},\mathcal{E})$, where $P=(O=V\biguplus
E,\rightarrow_c,\rightarrow_i,\Sigma,\Upsilon, L)$ is an \lpo,
$\mathcal{G}=\set{G_1,\ldots, G_k}$ is a set of Glue
relations over $O$ and $\rightarrow_c$, and
$\mathcal{E}:\Upsilon\hookrightarrow \mathcal{G}$ labels elements in $E$
(through their edge labels) by glue relations.

\end{definition}

\begin{definition}[\rulename{\glpo}-refinement]\label{def:refine}
An \lpo
$\rulename{lpo}=(O,\rightarrow_c,\rightarrow_i,\Sigma,\Upsilon,L)$
where $O=V\biguplus E$ refines a \glpo
$\rulename{lpg}=(P_g,\mathcal{G},\mathcal{E})$, denoted $\rulename{lpo}\preceq \rulename{lpg}$, where
$P_g=(O,\rightarrow_c,\rightarrow_i^g\Sigma,\Upsilon,L)$ if the
following conditions hold:
\begin{itemize}
\item
  For every $e\in E$ and  $(a,b) \in \mathcal{E}(L(e))$ we have
	$e\leq a$ or $b \leq e$.
  \item
   ${\rightarrow^g_i}\subseteq
  {\rightarrow_i}$ and $(e,e')\in ({\rightarrow_i\setminus
  \rightarrow^g_i})$ implies $(e',v)\in \mathcal{E}(L(e))$ for some
  $v$ or $(v,e)\in \mathcal{E}(L(e'))$ for some
  $v$.
\end{itemize}
\end{definition}

That is, the two share the relation $\rightarrow_c$, the relation
$\rightarrow_i^g$ is
preserved and extended by extra interleaving to capture the glue.
In order to respect the glue, an edge that is glued to a pair $(a,b)$
must happen either before $a$ or after $b$.

We show now that \glpos enable to remove parts of
the interleaving
order relation for both PTI-nets and CTSs.
\glpos capture better reconfiguration by
combining multiple order
choices due to the same reconfiguration into the same g-computation.

\newcommand{\removethissection}{
\subsection{Experimental Non-Emptiness of \glpo}
\np{Very drafty and unsure.}

We want to say that the following is not circular:
If $e$ is

If $ch(e)=c$ then every $(v,e')$ such that $c\in listen(v)$ is in $G(e)$
Consider a glue relation $G_i$.
We want to make sure that the following does not happen:
$(v_1,e_1), \ldots, (v_n,e_n)$ such that $\epsilon(L(e))$

When does a \glpo have \lpos refining it?
A glue relation is only important in a \glpo if there is some edge
that is labeled by that glue.
Consider an important glue relation and think about the induced
equivalence relation between glued elements.
Then, no minimal element can be equivalent to a maximal element and no
minimal element can have an infinite equivalence class.

Also, consider a more elaborate equivalence relation.
Consider a pair $(v,e)$ such that $(v,e) \in R$ for some glue relation
$R$.
Consider an edge $e'$ such that $R$ is the label of $e'$ and let
$(v',e')$ be in some relation $R'$ as well.
And so on.
If the transitive closure of that includes a cycle then this is bad.

Consider a \glpo where these bad things to not happen.
We have to show that we can find a \lpo that refines it.
We add edge-edge connections to manage the glue ordering.
This should have two elements:
choose ``blocked'' edges order them
according to the relation in (b) and as there is no cycle there has to
be a minimal element. Schedule that one before all the others.

If the scheduling does not cover the entire \glpo we should be able to
find an infinite chain as in (a).
}

\subsection{Glue Computations for PTI-nets}

Let $N = \conf{P, T, F, I}$
be a PTI-net and $m_0$ its initial marking.
We now define a \emph{g-computation}. The differences
from the definition of \lpo
(Definition~\ref{def:lpo-computation})
are highlighted with a $\qt{*}$.
\begin{definition}[\rulename{g}-computation]
	\label{def:g-computation}
	A \emph{g-computation} of $N$ is a \glpo
	$(P,\mathcal{G},\mathcal{E})$, where
	$P=(O,\comrel,\Sigma,\Upsilon,L)$, the
	components $V$, $E$, $\Sigma$, $\Upsilon$, and $L$
	are as for \lpo,  and the following holds.
	\begin{enumerate}
		\item[{\color{white}$^*$}N1.]
		The t-history $(\emptyset,t_\epsilon)$ is the unique minimal
		element according to $\leq$.
		\item[{\color{white}$^*$}N2.]
		For a p-history $v=(e,p,i)\in V$ we have $e\in E$ and $e$ is
		the unique edge such that $e\comrel v$.
		\item[{\color{white}$^*$}N3.]
		For a p-history $v=(h,p,i)\in V$, let $e_1,\ldots, e_j$ be the
		t-histories such that $v\comrel e_j$.
		Then, for every $j$ we have $e_j(v)>0$ and $\sum_j e_j(v)\leq
		i$.
		That is, $v$ leads to t-histories that contain it with
		the multiplicity of $v$ being respected.
		\item[$^*$N4.]
		For every $e\in E$, where $e=(\{(v_1,i_1),\ldots, (v_n,i_n)\},t)$
		the following holds:
		\begin{enumerate}
			\item[$^*$(a)]
			$\pre{e} =\{v_1,\ldots, v_n\}$ and
			$\post{e} =\{(e,p,\post{t}(p)) ~|~
			\post{t}(p)>0\}$.
		\end{enumerate}
		\item[$^*$N5.]
		For every $t\in T$  we have:
		$$\begin{array}{l@{}l}
			\mathcal{E}(t) ~=~ &
			\{ (v,e) ~|~ v\comrel e \mbox{ and } (L(v),t)\in I\}\ \cup\\
			& \{(e,v) ~|~ e\comrel v \mbox{ and } (L(v),t)\in I\}
		\end{array} $$
	\end{enumerate}
\end{definition}
That is, we drop $\intrel$ and assign each
inhibited event (or transition) with a glue relation.
Namely, for every transition $t$ add all \emph{existing} ingoing and
outgoing transitions of places that inhibit $t$.

We use $\comp_g(N)$ to denote the set of g-computations
of Petri net $N$.

\begin{toappendix}
\begin{lemma}\label{lem:1stPN}
	Given an \lpo $\pi\in\comp(N)$, there exists a
	corresponding  \glpo $\lfloor \pi\rfloor\in\comp_g(N)$
	such that $\pi \preceq \lfloor \pi\rfloor$.
\end{lemma}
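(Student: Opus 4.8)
Given an \lpo $\pi = (O, \comrel, \intrel, \Sigma, \Upsilon, L) \in \comp(N)$, I would construct $\lfloor \pi \rfloor$ by keeping the same underlying elements $O = V \biguplus E$, the same communication relation $\comrel$, the same labels, and then discarding $\intrel$ entirely, replacing it by the glue labelling $\mathcal{E}$ dictated by formula $^*$N5: for every $t \in T$, set $\mathcal{E}(t) = \{(v,e) \mid v \comrel e, (L(v),t) \in I\} \cup \{(e,v) \mid e \comrel v, (L(v),t)\in I\}$, and let $\mathcal{G}$ be the (finite) set of all such relations. The claim then has two halves: first, that $\lfloor\pi\rfloor$ is genuinely a g-computation, i.e.\ satisfies Definition~\ref{def:g-computation}; and second, that $\pi \preceq \lfloor\pi\rfloor$ in the sense of Definition~\ref{def:refine}.

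**Verifying $\lfloor\pi\rfloor \in \comp_g(N)$.** Conditions N1--N3 and $^*$N4(a) are essentially unchanged from Definition~\ref{def:lpo-computation}: N1, N2, N3 are literally the same statements, and $^*$N4(a) differs from N4(a) only in that the restrictions $\pre{e}\cap V$ and $\post{e}\cap V$ become $\pre{e}$ and $\post{e}$ --- but in an \lpo of a PTI-net the only relation touching edges besides $\comrel$ is $\intrel \subseteq E\times E$, and after deleting $\intrel$ there are no edge-edge arrows, so $\pre{e}$ and $\post{e}$ consist entirely of p-histories in $V$; hence the equalities transfer. Condition $^*$N5 holds by construction. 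So this half is routine bookkeeping; the only mild subtlety is checking that removing $\intrel$ does not disturb the requirement that $\leq$ (now the reflexive--transitive closure of $\comrel$ alone) is a partial order, which is immediate since it is a sub-relation of the original $\leq$.

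**Verifying the refinement $\pi \preceq \lfloor\pi\rfloor$.** Here $P_g = (O, \comrel, \intrel^g, \Sigma, \Upsilon, L)$ with $\intrel^g = \emptyset$. The first refinement bullet requires: for every $e \in E$ and every $(a,b) \in \mathcal{E}(L(e))$ we have $e \le a$ or $b \le e$. Unpacking $^*$N5, a pair in $\mathcal{E}(L(e))$ is either $(v,e')$ with $v \comrel e'$ and $(L(v), L(e)) \in I$, or $(e',v)$ with $e' \comrel v$ and $(L(v),L(e)) \in I$ --- in both cases $L(v) \in \inhpre{L(e)}$. By clause N4(b) of the \emph{original} \lpo we then have $e \le v$ or $v \le e$ (in $\pi$'s order). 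If $v \le e$: in case $(v,e')$ we are done ($e \le b$ fails but $a = v \le e$, wait --- here $a=v$, $b=e'$, so $a \le e$ gives the second disjunct $b \le e$? no, $a\le e$ is what we need as "$e\le a$ or $b\le e$"). Let me be careful: the pattern is that the glued pair records an edge $e'$ together with a place-history $v$ that inhibits $e$, and the original \lpo already forces $e$ to sit entirely before $v$ or entirely after $v$; translating "before/after $v$" into "before $a$ or after $b$" is exactly what clauses N4(b)--N4(c) of the \lpo were designed to deliver, since N4(c) guaranteed that the $\intrel$ edges linking such inhibiting situations point the consistent way. The second refinement bullet requires $\intrel^g \subseteq \intrel$ (trivial, $\emptyset$) and that every $(e,e') \in \intrel \setminus \intrel^g = \intrel$ be "explained" by a glue pair: $(e',v) \in \mathcal{E}(L(e))$ for some $v$, or $(v,e) \in \mathcal{E}(L(e'))$ for some $v$. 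This is precisely clause N4(c): if $e \intrel e'$ then there is $v$ with either $v\comrel e$ and $(L(v),L(e'))\in I$, or $e'\comrel v$ and $(L(v),L(e))\in I$ --- matching the two disjuncts of $^*$N5 for $\mathcal{E}(L(e'))$ respectively $\mathcal{E}(L(e))$, up to swapping the roles of $e$ and $e'$.

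**Main obstacle.** The delicate point is the first refinement bullet: going from "N4(b) says $e \le v$ or $v \le e$" to "$e \le a$ or $b \le e$" for the specific pair $(a,b)$. One must check both orientations of the glued pair ($(v,e')$ versus $(e',v)$) and use that $e' \comrel v$ or $v \comrel e'$ to push the inequality from $v$ onto $e'$ --- this needs the transitivity of $\le$ together with the fact that a t-history's pre/post places are immediately $\comrel$-adjacent to it, so no "gap" can appear. I would handle it by a short case split on the shape of the pair and on which disjunct of N4(b) holds, observing that in the bad-looking crossed case ($v \le e$ but we have pair $(v, e')$ with $e' \comrel v$, i.e.\ $e' \le v \le e$, giving $e' \le e$, which is fine because we need $b \le e$ with $b = e'$), everything lines up; the write-up is a handful of lines once the pairing convention in $^*$N5 is read carefully.
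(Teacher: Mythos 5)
Your construction is exactly the paper's: keep $\comrel$, set $\intrel^g=\emptyset$, define the glue by $^*$N5, and check the two conditions of Definition~\ref{def:refine}. Your treatment of the second condition (matching N4(c) of Definition~\ref{def:lpo-computation} against $^*$N5) coincides with the paper's proof, and your extra bookkeeping that $\lfloor\pi\rfloor$ satisfies Definition~\ref{def:g-computation} is fine (the paper does not even spell it out). The gap is in the first refinement condition, at the very point you flag as the main obstacle, and it comes from an orientation mix-up. A pair in $\mathcal{E}(L(e))$ is either $(v,e')$ with $v\comrel e'$, or $(e',v)$ with $e'\comrel v$, where in both cases $(L(v),L(e))\in I$. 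The ``crossed case'' you claim to resolve --- pair $(v,e')$ with $e'\comrel v$, hence $e'\leq v\leq e$ --- does not exist: in the orientation $(v,e')$ the adjacency is $v\comrel e'$, i.e.\ $v\leq e'$, which points the wrong way, so transitivity pushes nothing onto $e'$. What your chain $e'\leq v\leq e$ actually covers is the pair $(e',v)$ when N4(b) gives $v\leq e$; but there the required disjunct is $b\leq e$ with $b=v$, so that sub-case is immediate and needs no pushing at all.

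The two sub-cases that carry the real content are: (i) pair $(v,e')$ with $v\comrel e'$, when N4(b) yields $v\leq e$ --- here you must produce $e'\leq e$, and adjacency only gives $v\leq e'$; the paper settles this by arguing that $v\leq e$ together with $v\comrel e'$ forces $e'=e$ or $e'<e$, i.e.\ the chain witnessing $v\leq e$ leaves $v$ through its successor edge (note a p-history may have several outgoing edges by N3, so this step deserves a sentence, not just ``transitivity''); and (ii) pair $(e',v)$ with $e'\comrel v$, when N4(b) yields $e\leq v$ --- here you need $e\leq e'$, which does follow from N2: the incoming edge of a p-history is unique, so any chain from $e$ into $v$ must end with $e'$. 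This uniqueness is presumably what your remark about pre/post places being ``immediately $\comrel$-adjacent'' gestures at, but as written it is attached to the wrong case. So the skeleton matches the paper, but the case analysis at the heart of the first condition has to be redone with the correct orientations before the argument goes through.
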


\begin{proof}
	Let $\lfloor \pi \rfloor$ be the \glpo obtained from $\pi$ by using
	$\comrel$ of $\pi$, setting
	$\intrel^g=\emptyset$, and adding the
	glue relations according to Definition~\ref{def:g-computation}.

	We have to show that the conditions of
	Definition~\ref{def:refine} hold.
	Note that by construction both $\pi$ and $\lfloor \pi \rfloor$ agree
	on $\comrel\subseteq V\times E \cup E\times V$ and only disagree in
	terms of $\intrel \subseteq E\times E$ and the glue.

	Consider some $t\in T$ and $(a,b)\in \mathcal{E}(t)$.
	We have to show that $e \leq a$ or $b\leq e$.
	By definition we know that $a\comrel b$.
	We have the following cases.

	\begin{itemize}
		\item
		If $a\in V$ and $b\in E$ then $(L(a),t)\in I$.
		By $N4(b)$ in definition~\ref{def:lpo-computation} we
		have that either $e \leq a$ or $a\leq e$.
		If $e\leq a$ we are done.
		If $a\leq e$ then from $a\comrel b$ it follows that either
		$e=b$ or $b<e$.
		\item
		If $a\in E$ and $b\in V$ then by definition $(L(a),t)\in I$.
		By $N4(b)$ in definition~\ref{def:lpo-computation} we
		have that either $e \leq b$ or $b\leq e$.
		If $b\leq e$ we are done.
		If $e\leq b$ then from $a\comrel b$ it follows that either
		$e=a$ or $e<a$.
	\end{itemize}

	Consider some $(e,e')\in \intrel$.
	We have to show that either $(e',v)\in \mathcal{E}(L(e))$ for some
	$v$ or $(v,e)\in \mathcal{E}(L(e'))$.
	By definition, we have that there exists $v\in V$ such that one of
	the following holds.
	\begin{itemize}
	\item $(L(v),L(e'))\in I$ and $v\comrel e$. By $^*N5$
	in Definition~\ref{def:g-computation}, we have that
	$(v,e)\in\mathcal{E}(L(e'))$ as required.

	\item $(L(v),L(e))\in I$ and $e'\comrel v$. By $^*N5$
	in Definition~\ref{def:g-computation}, we have that
	$(e',v)\in\mathcal{E}(L(e))$ as required.
	\end{itemize}
\end{proof}

\begin{lemma}\label{lem:2ndPN}
	Given a \glpo $ \pi_1\in\comp_g(N)$ and  an \lpo $
	\pi_2$ such that $ \pi_2 \preceq \pi_1$ then
	$ \pi_2\in\comp(N)$.
\end{lemma}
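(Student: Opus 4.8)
The plan is to verify that each defining condition of an \lpo-computation of $N$ (Definition~\ref{def:lpo-computation}) holds for $\pi_2$, using the fact that $\pi_2$ refines the g-computation $\pi_1$. Since $\pi_2 \preceq \pi_1$, the two share the communication relation $\comrel$ and all of $V$, $E$, $\Sigma$, $\Upsilon$, $L$; they differ only in the interleaving relation ($\pi_2$ has some $\intrel$ with $\intrel^g = \emptyset \subseteq \intrel$ where $\intrel^g$ is the glued version inside $\pi_1$) and in the presence of the glue. Conditions N1, N2, N3 of Definition~\ref{def:lpo-computation} mention only $\comrel$ and $\leq$, and they already appear verbatim as conditions N1, N2, N3 of Definition~\ref{def:g-computation}, so they transfer directly once we check that adding interleaving edges does not destroy the property of $\leq$ being a partial order --- but that is already guaranteed by the definition of \lpo, which $\pi_2$ satisfies by hypothesis. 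Similarly N4(a) is exactly $^*$N4(a), so it holds.

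The substantive work is N4(b) and N4(c). For N4(b): take $e \in E$ and $v \in V$ with $L(v) \in \inhpre{L(e)}$. Since $v \comrel e$ or $e \comrel v$ or $v \incomparable e$ at the level of the shared $\comrel$, I split on whether $v$ is an ingoing or outgoing place-history of some transition. By condition $^*$N5 of Definition~\ref{def:g-computation}, every existing edge of $v$ with $(L(v), L(e)) \in I$ is placed in $\mathcal{E}(L(e))$; concretely, if $v \comrel e'$ for some edge $e'$ then $(v,e') \in \mathcal{E}(L(e))$, and if $e' \comrel v$ then $(e',v) \in \mathcal{E}(L(e))$. The first refinement clause of Definition~\ref{def:refine} then tells us $e \leq v$ or $e' \leq e$ in the first case, and $e \leq e'$ or $v \leq e$ in the second. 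A short chase using $v \comrel e'$ (resp.\ $e' \comrel v$), i.e.\ $v < e'$ (resp.\ $e' < v$), converts these into the desired $e \leq v$ or $v \leq e$. One must also handle the boundary cases where $v$ is the minimal $p$-history produced by $t_\epsilon$ or a maximal $p$-history with no outgoing edge; in those cases the single incident edge is $t_\epsilon$ or some $e'$, and the same argument applies, using minimality/maximality to resolve the disjunction. This mirrors, in the reverse direction, the case analysis already carried out in the proof of Lemma~\ref{lem:1stPN}.

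For N4(c): suppose $e \intrel e'$ in $\pi_2$. If $(e,e') \in \intrel^g$ this is vacuous since $\intrel^g = \emptyset$; otherwise $(e,e') \in \intrel \setminus \intrel^g$, so by the second clause of Definition~\ref{def:refine} either $(e', v) \in \mathcal{E}(L(e))$ for some $v$, or $(v, e) \in \mathcal{E}(L(e'))$ for some $v$. Unfolding $^*$N5: in the first subcase $(e',v)\in\mathcal{E}(L(e))$ means $e' \comrel v$ and $(L(v), L(e)) \in I$, which is exactly case (ii) of N4(c); in the second subcase $(v,e)\in\mathcal{E}(L(e'))$ means $v \comrel e$ and $(L(v), L(e')) \in I$, which is case (i). So the required witness $v$ exists. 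I expect N4(b) to be the main obstacle, because it requires the disjunction-chasing between $\leq$ and $\comrel$ to be done carefully in every combination of "$v$ is ingoing/outgoing" and "$e$ is before/after", and because the degenerate cases ($v$ touching $t_\epsilon$, or $v$ maximal) need the additional structural facts N1--N3 to close. Everything else is bookkeeping: the relation $\leq$ of $\pi_2$ is a partial order by hypothesis, so no consistency argument about the added interleaving is needed here --- that direction (existence of a refining \lpo) is the content of the companion non-emptiness discussion, not of this lemma.
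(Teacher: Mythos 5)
Your proposal is correct and follows essentially the same route as the paper's proof: reduce to conditions N4(b) and N4(c), unfold the glue assignment $^*$N5 of Definition~\ref{def:g-computation}, and apply the two refinement clauses of Definition~\ref{def:refine} with a short $\leq$-chase along $\comrel$. The only cosmetic difference is that for N4(b) the paper dispenses with your ingoing/outgoing case split and boundary-case worries by always taking the unique incoming edge $e'\comrel v$ guaranteed by N2 (which exists even for initial p-histories, via $t_\epsilon$), so the single disjunction $e\leq e'$ or $v\leq e$ suffices.
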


\begin{proof}
	Given that $\pi_2 \preceq \pi_1$, it follows that both
	$\pi_1$ and $\pi_2$ agree on
	$\comrel\subseteq V\times E \cup E\times V$ and only disagree in
	terms of $\intrel \subseteq E\times E$ and the glue.

	It is sufficient to prove that $N4$, items (b) and (c) in Definition~\ref{def:lpo-computation}
	hold for $\pi_2$. Consider some $e\in E$.
	We have the following cases.

	\begin{itemize}
		\item
		Consider some $v\in V$ and $e\in E$ such that
		$L(v)\in\inhpre{L(e)}$.
		In order to show that $\pi_2\in\comp(N)$ we have to show that
		$e\leq v$ or $v\leq e$.
		Let $e'$ be the edge such that $e'\comrel v$.
		By Definition~\ref{def:g-computation} ($^*N5$) we have that
		$(e',v) \in \mathcal{E}(L(e))$.
		By refinement, we have that either $e \leq e'$, which implies $e\leq v$,
		or $v\leq e$ as required.
		\item
		Consider some $e'\in E$ such that $e\intrel e'$.
		By refinement, we have one of the following cases holds.
		\begin{itemize}
			\item
			$(e',v)\in \mathcal{E}(L(e))$ for some $v$. By
			Definition~\ref{def:g-computation} ($^*N5$), we have
			that $e'\comrel v$ and $(L(v),L(e))\in I$ as required.
			\item
			$(v,e)\in \mathcal{E}(L(e'))$ for some $v$. By
			Definition~\ref{def:g-computation} ($^*N5$), we have
			that $v\comrel e$ and $(L(v),L(e'))\in I$ as required.
		\end{itemize}
	\end{itemize}
\end{proof}
\end{toappendix}

\begin{theoremrep}
	\label{thm:language glpo pn}
	Given a PTI-net $N$,
	$\comp(N) = \set{\pi \mid \pi \preceq \pi_g \wedge
	\pi_g\in\comp_g(N)}$.
\end{theoremrep}

\begin{proof}
	The proof follows directly from Lemma~\ref{lem:1stPN} and
	Lemma~\ref{lem:2ndPN}.
\end{proof}

\subsection{Glue Computations for CTSs}
Consider a system $\mathcal{S}=\mathcal{T}_1 \parallel \cdots \parallel
\mathcal{T}_n$,
where $\mathcal{T}_i=\langle
C_i,\Lambda_i,B_i,S_i, S_0^i, R_i,L_i,\listen_i\rangle$.
We denote $C=\bigcup_i C_i$ and
$B=\bigcup_i B_i$.

We now define a \emph{g-computation} for CTS.
As before, the differences from the definition of \lpo
(Definition~\ref{def:lpo-computation cts}) are highlighted
with a $\qt{*}$.

\begin{definition}[\rulename{g}-computation]
	\label{def:g-computation cts}
	A \emph{g-computation} of $S$ is a \glpo
	$(P,\mathcal{G},\mathcal{E})$, where
	$P=(O,\intrel,\comrel,\Sigma,\Upsilon,L_V,L_E)$
    and $V$, $E$, $\Sigma$, $\Upsilon$, and $L$ are as
    before, $\rightarrow_c=\rightarrow_s \biguplus
    \rightarrow_r$, and in addition:
   	\begin{enumerate}
		\item[{\color{white}$^*$}C1.]
		The edge $e_\epsilon$ such that $L(e_\epsilon)=(b,!,\toall)$ is the
		unique minimal element according to $\leq$.
		For every $i$, we have $s^0_i \in V$ and $e_\epsilon\rightarrow_r
		s^0_i$.
		\item[{\color{white}$^*$}C2.]
		If $h\in V\cap \hist(\mathcal{T}_i)$ there is
		a unique $e\in E$ such that $e\comrel h$.
		If $|h|>1$, there is also a unique $h'\in V$ such that $h'\comrel e$
		and either  $(h',L(e),h)\in R_i$ or $(h',?(L(e)),h)\in R_i$.
		\item[{\color{white}$^*$}C3.]
		For every $h \in V$ there is at most one $e\in E$ such that
		$h \comrel e$.
		\item[$^*$C4.]
		For every $e\in E\setminus\{e_\epsilon\}$ there is $I\subseteq [n]$
		such that all the following hold:
		\begin{enumerate}
			\item[(a)]
			For every $i\in I$ we have $|\pre{e}\cap \hist(\mathcal{T}_i)|=1$
			and $|\post{e}\cap \hist(\mathcal{T}_i)|=1$.
			\item[(b)]
			There is a unique $i\in I$ and $h,h'\in
			V\cap \hist(\mathcal{T}_i)$
			such that $(h,L(e),h')\in R_i$ and $h\rightarrow_s e
			\rightarrow_s h'$ and for every	$i' \in I \setminus\{i\}$ there
			are $h'',h'''\in V \cap \hist(\mathcal{T}_{i'})$ such that
			$h''\rightarrow_r e\rightarrow_r h'''$ and $(h'',?(L(e),h''')\in
			R_{i'}$.
		\end{enumerate}
		\item[{\color{white}$^*$}C5.]
		For every $e\neq e'$ such that
		$ch(e)=ch(e')$ we
		have $e\leq e'$ or $e'\leq e$.
		\item[$^*$C6.]
			If $e\intrel e'$ then the following holds:
			\begin{enumerate}
				\item[(a)]
				$ch(e)=ch(e')$.
			\end{enumerate}
		\item[$^*$C7.]
		For every $(\upsilon,!,c)\in B$  then
		\np{Should these be $\rightarrow_r$ or $\rightarrow_c$?
		}

		$$\begin{array}{l@{}l}
			\mathcal{E}((\upsilon,!,c)) ~=~ & \{ (h,e) ~|~ \mbox{for}\
			c\neq
			\toall,\ h\comrel e\ \mbox{and}\
\iflisten
			c \in \listen(h)\}\ \cup\\
\else
      h\not\rightarrow_{?(\upsilon,!,c)}\}\ \cup\\
\fi
			& \{(e,h) ~|~ \mbox{for}\  c\neq \toall, e \comrel h\
			\mbox{and}\
\iflisten
			c \in \listen(h)\}\ \cup\\
\else
			h\not\rightarrow_{?(\upsilon,!,c)}\}\ \cup\\
\fi
			& \{ (h,e) ~|~ \mbox{for}\ c=\toall,\ h\comrel e\
			\mbox{and}\
			h\rightarrow_{?(\upsilon,!,c)}\}\ \cup\\

			& \{(e,h) ~|~ \mbox{for}\  c= \toall, e \comrel h\
			\mbox{and}\
			h\rightarrow_{?(\upsilon,!,c)}\}\

		\end{array} $$
	\end{enumerate}
\end{definition}

We drop from the interleaving relation all order relations that
correspond to reconfiguration and keep only those that correspond to
the usage of a common resource.
Furthermore, we assign each  broadcast and multicast message with a
glue relation.
Namely, for every multicast $m$ add all \emph{existing}
ingoing and outgoing messages of histories that blocks $m$ execution;
for every broadcast $b$ add all \emph{existing} ingoing and outgoing
messages of histories that may participate in $m$. Note that, for the
case of broadcast, the rationale is that if such histories can
participate in a broadcast then they cannot be enabled independently
from the broadcast.
Notice that $^*C6$ adds one glue for every multicast
\emph{channel} but one for every broadcast
\emph{message}.

We use $\comp_g(\mathcal{S})$ to denote the set of g-computations of
CTS $\mathcal{S}$.

\begin{toappendix}
\begin{lemma}\label{lem:1st}
	Given an \lpo $\pi\in\comp(\mathcal{T})$, there exists a
	corresponding  \glpo $\lfloor \pi\rfloor\in\comp_g(\mathcal{T})$
	such that $\pi \preceq \lfloor \pi\rfloor$.
\end{lemma}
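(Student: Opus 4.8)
The plan is to follow the proof of Lemma~\ref{lem:1stPN} (the PTI-net counterpart of this statement), adapting it to the fact that, unlike PTI-nets, a CTS computation keeps some interleaving edges, namely those recording the shared use of a channel. Given $\pi=(O,\comrel,\intrel,\Sigma,\Upsilon,L)\in\comp(\mathcal{S})$, I would take $\lfloor\pi\rfloor=(P_g,\mathcal{G},\mathcal{E})$ to have the same $O$, $\comrel$, $\Sigma$, $\Upsilon$, $L$ as $\pi$; to set $\intrel^g=\{(e,e')\in\intrel\mid ch(e)=ch(e')\}$, i.e.\ keep exactly the channel-sharing interleaving edges and drop the reconfiguration-induced ones; and to let $\mathcal{E}$ be the glue labelling prescribed by clause C7 of Definition~\ref{def:g-computation cts}, with $\mathcal{G}$ its range. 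Since $\leq^g$, the order of $\lfloor\pi\rfloor$, is the reflexive--transitive closure of $(\comrel\cup\intrel^g)\subseteq(\comrel\cup\intrel)$, we have $\leq^g\,\subseteq\,\leq$, so $\leq^g$ inherits antisymmetry and is a partial order.

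Step one is to check $\lfloor\pi\rfloor\in\comp_g(\mathcal{S})$, i.e.\ that it satisfies Definition~\ref{def:g-computation cts}. Clauses C1, C2, C3 and C4(a),(b) mention only $\comrel$ and the unchanged data $V,E,\Sigma,\Upsilon,L$, and are a weakening of the corresponding \lpo clauses, hence are inherited from $\pi$. Clause C7 holds by construction of $\mathcal{E}$, and C6(a) holds because $\intrel^g$ was defined to contain only channel-matching pairs. The one clause that needs an argument is C5 (events sharing a channel are $\leq^g$-comparable): here one must verify that a $\leq$-chain in $\pi$ between two channel-$c$ events can always be realised through $\comrel$-steps and channel-$c$ interleaving steps only, so that it survives the removal of the cross-channel (reconfiguration) edges. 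This is the genuinely new point relative to Lemma~\ref{lem:1stPN}, where $\intrel^g=\emptyset$ and no such issue arises.

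Step two is to verify $\pi\preceq\lfloor\pi\rfloor$ against Definition~\ref{def:refine}. The inclusion $\intrel^g\subseteq\intrel$ holds by construction. For $(e,e')\in\intrel\setminus\intrel^g$ we have $ch(e)\neq ch(e')$, so the justification of $e\intrel e'$ given by clause C6 of Definition~\ref{def:lpo-computation cts} cannot be its clause (i) and must be one of (ii)--(v), each of which supplies a history $h$; comparing that clause with the matching summand of C7 gives $(h,e)\in\mathcal{E}(L(e'))$ (clauses ii, iv) or $(e',h)\in\mathcal{E}(L(e))$ (clauses iii, v), which is exactly what refinement requires. For the first refinement condition, fix $e\in E$ with $L(e)=(\upsilon,!,c)$ and a glue pair $(a,b)\in\mathcal{E}((\upsilon,!,c))$. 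By inspection of C7, $(a,b)$ is either $(h,e_h)$ with $h\comrel e_h$, or $(e_h,h)$ with $e_h\comrel h$, where the history $h$ satisfies $c\in\listen(h)$ if $c\neq\toall$, and $h\rightarrow_{?(L(e))}$ if $c=\toall$; in both cases C4(c) (for $c\neq\toall$) or C4(d) (for $c=\toall$) of Definition~\ref{def:lpo-computation cts} tells us that $h\leq e$ or $e\leq h$. If $(a,b)=(h,e_h)$, I need $e\leq h$ or $e_h\leq e$: the first disjunct is immediate when $e\leq h$, and when $h\leq e$ I use that $e_h$ is the unique outgoing edge of $h$ (C3), so any $\leq$-chain out of $h$ begins with $h\comrel e_h$ and hence $e_h\leq e$. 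If $(a,b)=(e_h,h)$, I need $e\leq e_h$ or $h\leq e$: the second disjunct is immediate when $h\leq e$, and when $e\leq h$ I use that $e_h$ is the unique incoming edge of $h$ (C2), so any $\leq$-chain into $h$ ends with $e_h\comrel h$ and hence $e\leq e_h$.

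The main obstacle is the first refinement condition just sketched: turning the mere comparability handed over by C4(c)/(d) into the oriented $e\leq a\vee b\leq e$ demanded by the glue. The device that does this is the routing-through-the-unique-edge observation (via C2 and C3); it is essentially the trick already implicit in the proof of Lemma~\ref{lem:1stPN}, but the bookkeeping is heavier here because C7 has four flavours of glue pair (ingoing vs.\ outgoing, multicast vs.\ broadcast) and one must keep the side conditions ($c\in\listen(h)$ versus $h\rightarrow_{?(L(e))}$) consistent across C4, C6 and C7. The secondary obstacle, new to the CTS setting, is the C5 check in step one: ensuring that discarding the cross-channel interleaving does not make two events on a common channel incomparable.
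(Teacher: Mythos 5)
Your construction and verification coincide with the paper's own proof: the same $\lfloor\pi\rfloor$ (keep $\comrel$, retain only the channel-matching interleaving edges, glue as in C7 of Definition~\ref{def:g-computation cts}), the first refinement condition discharged via C4(c)/(d) of Definition~\ref{def:lpo-computation cts} together with the uniqueness of the edge adjacent to a history (C2/C3), and the second by matching the C6(b)--(e) witness against the corresponding summand of C7. The membership obligation you flag (clause C5 of Definition~\ref{def:g-computation cts} with respect to the smaller order $\leq^g$) is not treated in the paper's proof either --- it takes $\lfloor\pi\rfloor\in\comp_g(\mathcal{T})$ for granted after stating the construction --- so leaving it as an explicitly identified step does not put your argument behind the paper's, and is in fact the more careful reading.
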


\begin{proof}
	Let $\lfloor \pi \rfloor$ be the \glpo obtained from $\pi$ by using
	the	partial order induced by $\comrel$ of $\pi$, by $\intrel$ of
	$\pi$ whenever $e\intrel e'$ implies $ch(e)=ch(e')$, and adding the
	glue relations according to Definition~\ref{def:g-computation cts}.

	Note that by construction both $\pi$ and $\lfloor \pi \rfloor$ agree
	on $\comrel\subseteq V\times E \cup E\times V$ and  only agree on
	$\intrel \subseteq E\times E$ whenever $e\intrel e'$
	implies $ch(e)=ch(e')$.
%
	We have to show that the conditions of
	Definition~\ref{def:refine} hold.

	Consider some $e\in E$ and $(a,b)\in \mathcal{E}(L(e))$.
	We have to show that $e\leq a$ or $b\leq e$.
	By definition we know that $a\comrel b$.
	We show that either $e \leq a$ or $b \leq e$.
	We have the following  cases.
	\begin{itemize}
		\item
		$ch(L(e))$ is a multicast channel:
		\begin{itemize}
			\item
			If $a\in V$ and $b\in E$ then by definition
\iflisten
			$ch(L(e)) \in \listen(a)$.
\else
			$a \not\rightarrow_{?(L(E))}$.
\fi
			By $C4(c)$ in definition~\ref{def:lpo-computation cts} we
			have that either $e \leq a$ or $a\leq e$.
			If $e\leq a$ we are done.
			If $a\leq e$ then from $a\comrel b$ it follows that either
			$e=b$ or $b<e$.
			\item
			If $a\in E$ and $b\in V$ then by definition
\iflisten
			$ch(L(e)) \in \listen(b)$.
\else
			$b\not\rightarrow_{?(L(e))}$.
\fi
			By $C4(c)$ in definition~\ref{def:lpo-computation cts} we
			have that either $e \leq b$ or $b\leq e$.
			If $b\leq e$ we are done.
			If $e\leq b$ then from $a\comrel b$ it follows that either
			$e=a$ or $e<a$.
		\end{itemize}
		\item
		$ch(L(e))$ is the broadcast channel:
		\begin{itemize}
			\item
			If $a\in V$ and $b\in E$ then $a\rightarrow_{?(L(e))}$.
			By $C4(d)$ in definition~\ref{def:lpo-computation cts} we
			have that either $e \leq a$ or $a\leq e$.
			If $e\leq a$ we are done.
			If $a\leq e$ then from $a\comrel b$ it follows that either
			$e=b$ or $b<e$.
			\item
			If $a\in E$ and $b\in V$ then $b\rightarrow_{?(L(e))}$.
			By $C4(d)$ in definition~\ref{def:g-computation cts} we
			have that either $e \leq b$ or $b \leq e$.
			If $b\leq e$ we are done.
			If $e\leq b$ then from $a\comrel b$ it follows that either
			$e=a$ or $e<a$.
		\end{itemize}
	\end{itemize}

	Consider $e,e'\in E$ such that $(e,e')\in\intrel^g$.
	By construction we have that  $(e,e')\in\intrel$, and thus
	$\intrel^g\subseteq\intrel$.
	Consider $e,e'\in E$ such that $(e,e')\in
	(\intrel\backslash\intrel^g)$.
	By $C6(b)-(e)$ in
	Definiton~\ref{def:lpo-computation cts} there exists $v\in V$
	such that one of the following holds.
  \begin{itemize}
    \item
    $ch(L(e'))\neq\toall$,
\iflisten
		$ch(L(e')) \in \listen(v)$ and
\else
    $v\not\rightarrow_{?(L(e'))}$ and
\fi
    $v\comrel e$.
    By $^*C7$ in Definition~\ref{def:g-computation cts}, we have that
    $(v,e)\in\mathcal{E}(L(e'))$ as required.

   	\item
   	$ch(e)\neq\toall$,
\iflisten
		$ch(L(e)) \in \listen(v)$ and
\else
   	$v\not\rightarrow_{?(L(e))}$ and
\fi
   	$e'\comrel v$. By $^*C7$
   	in Definition~\ref{def:g-computation cts}, we have that
   	$(e',v)\in\mathcal{E}(L(e))$ as required.

   	\item $ch(e')=\toall$, $v\rightarrow_{?(L(e'))}$ and
  	$v\comrel e$. By $^*C7$
  	in Definition~\ref{def:g-computation cts}, we have that
  	$(v,e)\in\mathcal{E}(L(e'))$ as required.

  	\item $ch(e)=\toall$, $v\rightarrow_{?(L(e))}$ and
  	$e'\comrel v$. By $^*C7$
  	in Definition~\ref{def:g-computation cts}, we have that
  	$(e',v)\in\mathcal{E}(L(e))$ as required.
	\end{itemize}
\end{proof}

\begin{lemma}\label{lem:2nd}
	Given a \glpo $ \pi_1\in\comp_g(\mathcal{T})$ and  an \lpo $
	\pi_2$ such that $ \pi_2 \preceq \pi_1$ then
	$ \pi_2\in\comp(\mathcal{T})$.
\end{lemma}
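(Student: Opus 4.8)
The plan is to prove this exactly as the CTS counterpart of Lemma~\ref{lem:2ndPN}, reading the glue clause $^*$C7 of Definition~\ref{def:g-computation cts} and the refinement conditions of Definition~\ref{def:refine} ``backwards''. Since $\pi_2 \preceq \pi_1$, the two objects share the carrier $O = V \biguplus E$, the alphabets, the labelling $L$, and the communication relation $\comrel = \rightarrow_s \biguplus \rightarrow_r$; they differ only in that the interleaving relation of $\pi_2$ extends $\intrel^g$ of $\pi_1$, and that $\pi_2$ carries no glue. First I would observe that conditions C1, C2, C3 and C4(a)--(b) of Definition~\ref{def:lpo-computation cts} are phrased purely in terms of $\comrel$, $L$ and the transition relations $R_i$, all of which $\pi_2$ inherits unchanged, so they hold for $\pi_2$ because $\pi_1$ is a g-computation; for C1 one additionally uses that enlarging the interleaving relation only enlarges $\leq$, hence $\leq_{\pi_1} \subseteq \leq_{\pi_2}$ and $e_\epsilon$ remains the minimum. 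The same monotonicity gives C5 for $\pi_2$ from C5 for $\pi_1$. Thus it remains to verify C4(c), C4(d) and C6 for $\pi_2$ --- precisely the reconfiguration/blocking requirements that the g-computation discards in favour of the glue.

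For C4(c) I would fix an edge $e$ with $L(e) = (\upsilon,!,c)$, $c \neq \toall$, and a history $h \in V$ with $c \in \listen(h)$, and let $e_h$ be the unique edge with $e_h \comrel h$ (given by C1/C2). The second summand of $\mathcal{E}(L(e))$ in $^*$C7 then contains $(e_h, h)$, so the first refinement condition gives $e \leq e_h$ or $h \leq e$. In the latter case we are done; in the former, transitivity together with $e_h \comrel h$ yields $e \leq h$, unless $h = s^0_i$ and $e_h = e_\epsilon$, in which case $e \leq e_\epsilon$ forces $e = e_\epsilon$, contradicting $L(e) \neq L(e_\epsilon)$, so $h \leq e$. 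Either way $h$ and $e$ are comparable, as C4(c) demands. C4(d) is the identical argument using the two broadcast summands of $^*$C7 and the hypothesis $h \rightarrow_{?(L(e))}$.

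For C6 I would take $(e,e')$ in the interleaving relation of $\pi_2$. If $(e,e') \in \intrel^g$, then $^*$C6 of Definition~\ref{def:g-computation cts} gives $ch(e) = ch(e')$, which is the first clause of C6. Otherwise $(e,e')$ lies in the extra interleaving, and the second refinement condition supplies a $v \in V$ with $(e',v) \in \mathcal{E}(L(e))$ or $(v,e) \in \mathcal{E}(L(e'))$; unfolding the matching summand of $^*$C7 --- a case split on whether the channel of the glued edge is $\toall$ or a multicast channel, and on which side of the glued pair the history lies --- produces exactly the witness $v$ and the side condition required by one of the remaining clauses of C6. This is the mirror image of the case analysis in the proof of Lemma~\ref{lem:1st}, so the matching is already laid out there. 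I expect the main obstacle to be purely organisational rather than conceptual: keeping the four summands of $^*$C7 aligned with the C6 and C4 clauses, and handling the two boundary situations --- histories whose sole predecessor is $e_\epsilon$, and the $\leq$-monotonicity needed to rescue C5, which has no analogue in the PTI-net proof.
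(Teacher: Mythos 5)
Your proposal matches the paper's own proof: reduce to verifying C4(c)--(d) and C6 for $\pi_2$ (the other conditions being inherited since $\comrel$ is shared and $\leq$ only grows), then discharge C4(c)--(d) via the glue pairs of $^*$C7 together with the first refinement condition, and C6 via $^*$C6 for pairs in $\intrel^g$ and the second refinement condition plus $^*$C7 for the extra interleavings. The only cosmetic differences are that you pin down the witness as the unique incoming edge $e_h$ of $h$ where the paper takes an arbitrary incident edge, and your ``unless $e_h=e_\epsilon$'' detour in C4(c) is unnecessary (transitivity already gives $e\leq h$ there); neither affects correctness.
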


\begin{proof}
	Given that $\pi_2 \preceq \pi_1$, it follows that both
	$\pi_2$ and $\pi_2$ agree on $\comrel\subseteq
	V\times E \cup E\times V$ and  only agree on
	$\intrel \subseteq E\times E$ whenever $e\intrel e'$
	implies $ch(e)=ch(e')$.
	Hence, it is sufficient to prove that $C4(c)-(d)$ and $C6(b)-(e)$ in
	Definition~\ref{def:lpo-computation cts} hold
	for $\pi_2$.

	\noindent
	We prove $C4(c)-(d)$.
	Consider some $e\in E$.
	We have the following cases.

	\begin{itemize}
		\item
		$ch(L(e))$ is a multicast channel:

		Consider some $v\in V$  such that
\iflisten
		$ch(L(e))\in\listen(v)$.
\else
		$v \not\rightarrow_{?(L(e))}$.
\fi
		We have to show that $e\leq v$ or $v\leq e$.
		By Definition~\ref{def:g-computation cts} ($^*C7$), there is some
		$e'$ such that one of the following cases holds.
		\begin{itemize}
			\item
			$(v,e')\in\mathcal{E}(L(e))$ where $v\comrel e'$.
			By refinement, we have that if
			$(v,e')\in\mathcal{E}(L(e))$
			then either $e\leq v$ as required or $e'\leq e$, which implies
			that $v\leq e$.
			\item
			$(e',v)\in\mathcal{E}(L(e))$ where $e'\comrel v$.
			By refinement, we have that if
			$(e',v)\in\mathcal{E}(L(e))$
			then either $v\leq e$ as required or $e\leq e'$, which implies
			that $e\leq v$.
		\end{itemize}
		\item
		$ch(L(e))$ is a broadcast channel:

		Consider some $v\in V$  such that $v\rightarrow_{?(L(e))}$.
		We have to show that $e\leq v$ or $v\leq e$.
		By Definition~\ref{def:g-computation cts} ($^*C7$), there
		is some $e'$ such that one of the	following cases holds.
		\begin{itemize}
			\item
			$(v,e')\in\mathcal{E}(L(e))$ where $v\comrel e'$.
			By refinement, we have that if $(v,e')\in\mathcal{E}(L(e))$
			then either $e\leq v$ as required or $e' \leq e$, which implies
			that $e\leq v$.
			\item
			$(e',v)\in\mathcal{E}(L(e))$ where $e'\comrel v$.
			By refinement, we have that if $(e',v)\in\mathcal{E}(L(e))$
			then either $v \leq e$ as required or $e\leq e'$, which implies
			that $e\leq v$.
		\end{itemize}
	\end{itemize}
	We prove $C6(b)-(e)$.
	Consider $(e,e')\in \intrel$ such that $(e,e')\in
	(\intrel\backslash\intrel^g)$.
	By refinement, we have one of the following cases hold.
	\begin{itemize}
		\item
		$ch(L(e))$ is a multicast channel:
		\begin{itemize}
			\item
			$(e',v)\in \mathcal{E}(L(e))$ for some $v$.
			By Definition~\ref{def:g-computation cts} ($^*C7$), we have
			that $e'\comrel v$ and
\iflisten
			$ch(L(e)) \in \listen(v)$
\else
			$v\not\rightarrow_{?(L(e))}$
\fi
			as required.
			\item
			$(v,e)\in \mathcal{E}(L(e'))$ for some $v$.
			By Definition~\ref{def:g-computation cts} ($^*C7$), we have
			that $v\comrel e$ and
\iflisten
			$ch(L(e)) \in \listen(v)$
\else
			$v\not\rightarrow_{?(L(e'))}$
\fi
			as required.
		\end{itemize}
		\item
		$ch(L(e))$ is a broadcast channel:
		\begin{itemize}
			\item
			$(e',v)\in \mathcal{E}(L(e))$ for some $v$.
			By Definition~\ref{def:g-computation cts} ($^*C6$), we have
			that $e'\comrel v$ and $v\rightarrow_{?(L(e))}$ as required.
			\item
			$(v,e)\in \mathcal{E}(L(e'))$ for some $v$.
			By Definition~\ref{def:g-computation cts} ($^*C6$), we have
			that $v\comrel e$ and $v\rightarrow_{?(L(e'))}$ as required.
		\end{itemize}
	\end{itemize}
\end{proof}
\end{toappendix}

\begin{theoremrep}
	\label{thm:language glpo cts}
	Given a CTS $\mathcal{T}$,
	$\comp(\mathcal{T}) =  \set{\pi \mid \pi \preceq \pi_g \wedge
	\pi_g\in\comp_g(\mcal{T})}$.
\end{theoremrep}

\begin{proof}
	The proof follows by Lemma~\ref{lem:1st} and Lemma~\ref{lem:2nd}.
\end{proof}

\section{Separating Choice and  Reconfiguration-Forced
Interleaving}
\label{sec:separation}

We show that \glpos capture the differences between nondeterministic
choice, which corresponds to different \glpos, and interleaving
choices due to reconfiguration, which correspond to different ways to
refer to glue.
For both PTI-nets and CTS we show that distinct \glpos contain
different nondeterministic or order choices.

\subsection{Choice vs Interleaving in PTI-nets}

A choice is a situation where a set of tokens have exactly the same
history and they do a different exchange.

We show that every two distinct g-computations of the same net have a
set of tokens that ``see the difference''.
That is, they participate in a different transition in the two
g-computations.
This includes the option of tokens in one g-computation participating
in a transition and tokens in the other g-computation not continuing.

\begin{theoremrep}
  Given a Petri net $P$ and two different \glpos $G_1,G_2\in\comp_g(N)$
  then there exists a set of nodes $v_1,\ldots, v_n$ appearing in both
  $G_1$ and in $G_2$ such that one of the following holds:
  \begin{enumerate}
  	\item
  	There is a node $v_i$ such that the number of tokens not taken from
  	$v_i$ in $G_1$ and $G_2$ is different.
  	\item
  	\label{difference 2}
  	There is a set of p-histories $v_1,\ldots, v_n$ that participate in
  	some transition $t$ in $G_i$ but not in $G_{3-i}$.
  \end{enumerate}
  \label{thm:different glpos petri nets}
\end{theoremrep}

Notice that item~\ref{difference 2} includes the case where the
transition $t$ happens in both $G_1$ and $G_2$ but takes a different
number of tokens from every node.
This difference is indeed significant as the nodes communicate via the
identified transition and share the knowledge about the difference.

Theorem~\ref{thm:different glpos petri nets} is not true for \lpos.
This is already shown by the very simple examples in
Figure~\ref{fig:petriex}(b).
Indeed, in the two \lpos corresponding to each of the dashed arcs in the
figure all sets of nodes participate in exactly the same transitions.

\begin{appendixproof}
	We define the \emph{depth} of a history to be the maximal number of
	transitions taken by some token in the history.
	Formally, the depth $(\emptyset,t_\epsilon)$ is $0$.
	The depth of a p-history $(h,p,j)$ is $depth(h)+1$.
	For a t-history $e\in E$, let $\pre{h}$ be $\{e_1,\ldots, e_n\}$,
	then the depth of $e$ is $\max_j depth(e_j)$.
	Notice, that a t-history $e$ could have other edges in its preset.

	We order the elements in a \glpo by increasing depth. In addition,
	elements of the same depth are ordered so that edges appear before
	vertices and there is some arbitrary order between edges of the same
	depth and between vertices of the same depth.
	Clearly, in this order every element appears after all the elements
	that are smaller than it according to $\leq$.
	Indeed, if $a\comrel b$ or $a\intrel b$, then the depth of $b$ is at
	least the depth of $a$ plus one.
	As every element has a finite depth and there is a finite number of
	elements in every depth, it follows that this order is some
	linearisation of \emph{all} the elements in the \glpo.

	We prove the theorem by induction according to the order mentioned
	above.
	We are going to mark nodes and edges that appear in both $G_1$ and
	$G_2$.
	Nodes are marked by the number of tokens in them that we have not
	handled yet.
	When this number is $0$ the node is called closed.
	Otherwise, it is open.
	Edges are simply marked (or unmarked).
	For all marked nodes, we ``handle'' tokens that are participating
	in the same transitions in $G_1$ and $G_2$.
	Nodes could have tokens that do not participate in transitions.
	As we ``handle'' tokens we mark transitions continuing from the
	node as not forming part of the difference between $G_1$ and $G_2$.
	Once we mark nodes as closed they are also equivalent in $G_1$
	and $G_2$.
	As we go through the nodes in $G_1$ in induction order either we
	find a difference or, if not, the induction proves that $G_1$ and
	$G_2$ are equivalent in contradiction to the assumption.

	Both $G_1$ and $G_2$ have the t-history
	$h_\epsilon=(\emptyset,t_\epsilon)$ as minimal element. Mark it as
	closed.
	The p-histories of the form $(h_\epsilon,p,m_0(p))$ such that
	$m_0(p)>0$ are marked by $m_0(p)$.
	Clearly, as both $G_1$ and $G_2$ start from the initial marking
	$m_0$ both $G_1$ and $G_2$ have the same nodes marked and they
	have the same positive number of tokens.

	Assume that we have marked a prefix of $G_1$ and $G_2$ such that
	all closed nodes have all their outgoing transitions marked.
	Furthermore, the number marking a node is sufficient for all
	unmarked transitions existing from the node.
	Clearly, this is true of the marking of the minimal nodes.

	Suppose that there are some open nodes.
	Choose the minimal open node $v$ according to the induction order.
	If there are no unmarked edges connected to $v$ in both $G_1$ and
	$G_2$ then mark $v$ as closed.
	If there is no unmarked edge connected to $v$ in $G_1$ and there is
	some unmarked edge connected to $v$ in $G_2$ then we have found a
	difference as the number of tokens ``left'' in $v$ in $G_1$ is
	larger than in $G_2$. In this case, we have identified the
	difference between $G_1$ and $G_2$.
	Similarly for the other way around.

	The remaining case is when both in $G_1$ and $G_2$ there are
	unmarked edges connected to $v$.
	Let $e$ be the minimal unmarked edge connected to $v$ in $G_1$.
	If $e$ is not connected to $v$ in $G_2$ we are done.
	Indeed, the preset of $e$ either participate in $e$ in $G_1$ and
	not in $G_2$ or participate in a transition $L_E(e)$ in different
	ways in $G_1$ and $G_2$.

	Otherwise, $e$ is connected to $v$ both in $G_1$ and $G_2$.
	By its construction as a multiset of place histories, $e$ ``takes''
	the same number of tokens from $v$ in $G_1$ and $G_2$.
	As $e$ is unmarked, all the other nodes that $e$ takes tokens from
	have a sufficient number of unhandled tokens.
	Again, by $e$'s structure as a pair of a multiset and a transition,
	$e$ connects to exactly the same nodes in $G_1$ and $G_2$ in the same
	way.
	Reduce the marking of all predecessors of $e$ by the number of
	tokens taken by $e$ from them.
	If some of them are reduced to $0$ then they are closed.
	Mark $e$ as well.

	If there are no open nodes, then both $G_1$ and $G_2$ are finite
	and equivalent.
	Otherwise, continue handling open nodes by induction.
\end{appendixproof}

We note that by the proof of Theorem~\ref{thm:language glpo pn} all
the \lpos that disagree only on forced interleavings are refined by the
same \glpo.

\subsection{Choice vs Interleaving in CTSs}
We now proceed with CTS.
Here, a choice is either a situation where all the agents have exactly
the same history and at least one agent participates in a different
communication or communications on the same channel are ordered in a
different way.
Notice that as channels are global resources, the case that changing
the order of communications on a channel does not have side effects is
accidental. Indeed, such a change of order could have side effects and
constitutes a different choice.

We show that every two distinct g-computations of the
same CTS have a joint history of some agent that ``sees the
difference'' or a channel that transfers messages in a different order.
Difference for a history is either maximality in one and not the other
or extension by different communications in the two g-computations.

\begin{theoremrep}
	\label{thm:different glpos cts}
	Given a CTS $\mathcal{T}$ and two different \glpos
	$G_1,G_2\in\comp_g(N)$ then one of the following holds:
	\begin{enumerate}
 		\item For some agent $i$ there exists a history $h_i$ in both $G_1$
 		and $G_2$ such that either $h_i$ is maximal in $G_i$ and not
 		maximal $G_{3-i}$;
 		\item
 		For some agent $i$ there exists a history $h_i$ in both $G_1$ and
 		$G_2$ such that the edges	$e_1$ and $e_2$ such that $h_i \comreli{1}
 		e_1$ and	$h_i\comreli{2} e_2$ we have $L^1_E(e_1) \neq L^2_E(e_2)$;

 		\item
 		\label{difference 3 cts}
 		or; There is a pair of agents $i$ and $i'$ and histories $h_i$ and
 		$h_{i'}$ in both $G_1$ and $G_2$ such that the order between the
 		communications of $i$ and $i'$ is different in $G_1$ and $G_2$.
 	\end{enumerate}
\end{theoremrep}

As for PTI-nets, Theorem~\ref{thm:different glpos cts} is not true for
\lpos.
This does not hold as shown by the \lpos and \glpo of the CTS in
Figure~\ref{fig:cts}.
Recall, that this CTS has the same \lpos and \glpos depicted in
Figure~\ref{fig:petriex}(b).

\begin{appendixproof}
	As before, we define the \emph{depth} of elements in a partial
	order as their distance from a minimal element.
	Formally, the depth of the minimal element is $0$ and all the initial
	states (runs of length $1$) have depth of $1$.
	The depth of a non-minimal element $o$ is $\max_{o'\in \pre{o}}
	depth(o')+1$.

	As before, we order the elements in a \glpo by increasing depth.
	In addition, elements of the same depth are ordered so that edges
	appear before vertices and there is some arbitrary order between
	edges of the same depth and between vertices of the same depth.
	In this order, every element appears after all the elements that
	are smaller than it according to $\leq$.
	As before, every element has a finite depth and there is a finite
	number of elements in every depth.
	Hence, if we follow this order constitutes a linearization of the
	elements of the \glpo.

	We prove the theorem by induction according to the order mentioned
	above.
	As before, we are going to mark elements in the partial
	order as ``equivalent'' in both $G_1$ and $G_2$.
	The marking here is simpler (immediately closed marking).

	Consider the minimal element edges in $G_1$ and $G_2$ and their
	post-sets of runs of length 1 (depth 1).
	By definition, these correspond to the initial states of the
	different agents.
	It follows that they are the same.
	Mark all of them.

	Assume that we have marked up to a point in $G_1$ and $G_2$
	according to the induction order.
	We build the marking so that the maximal marked elements
	according to $\leq$ are	all nodes.
	Obviously, all maximal (according to $\leq$) marked elements are
	incomparable.
	It follows that we maintain the minimal unmarked element (in
	induction order) as an edge.
	Clearly, this is true for the marking of the minimal nodes.

	Consider the set of unmarked edges in $G_1$ and $G_2$.
	If both are empty, then $G_1$ and $G_2$ are the same.
	Suppose that the set of unmarked edges in (wlog) $G_1$ is empty and
	$G_2$ is not empty.
	Consider the sender participating in the communication of the first
	unmarked edge in $G_2$.
	It must be the case that we have found an agent $i$ and a history
	$h_i$ that is maximal in $G_1$ and not maximal in $G_2$.
	The remaining case is that both $G_1$ and $G_2$ have
	unmarked edges.

	Consider the \glpo $G_1$.
	Let $e$ be the minimal unmarked edge in $G_1$ according to the
	induction order.
	Let $h_1,\ldots, h_n$ be $\pre{e}$ in $G_1$ with $h_1$ being the
	sender.
	As all elements of smaller depth than $e$ have been marked, it
	follows that $h_1,\ldots, h_m$ have been marked and that they
	appear also in $G_2$.

	Consider a history $h_i\in \pre{e}$.
	If $h_i$ is maximal in $G_2$ we are done.
	Otherwise, let $e_i$ be the edge such that $h_i\succi{2} e_i$.
	If $L^1_E(e)\neq L^2_E(e_i)$ we are done as $h_i$ does something
	different in $G_1$ and $G_2$.
	The same holds for every $j \in \{1,\ldots, m\}$.
	Hence, for every $j$ we have $e_j$ exists and $L_E(e_j)=L(e)$.

	Suppose that $G_1$ and $G_2$ are different here.
	This can only happen if there are at least two agents $j$ and $j'$
	for which $e_j$ and $e_{j'}$ are distinct edges labeled by the same
	communication.
	In particular, $n\geq 2$ and the agents in histories $h_i$ for
	$i>1$ are listening to channel $ch(L_E(e))$.

	However, for $e_j$ and $e_{j'}$ each, there is a unique sender.
	If $h_1$ is not sending in $G_2$ then $h_1$ does something
	different in $G_1$ and $G_2$ and we are done.
	Wlog, assume that $h_1$ is the sender of $e_j$.
	Consider the following options.
	\begin{itemize}
		\item
		Suppose that one of the agents $h_i$ for $i>1$ is the sender of
		$e_{j'}$.
		Then, $h_i$ is a history that receives in $G_1$ and sends in
		$G_2$.
		Thus, $h_i$ does something different in $G_1$ and $G_2$.
		\item
		Suppose that there exists an additional agent $k$ and a history
		$h_k$ such that $h_k$ is the sender for $e_{j'}$.
		In order \emph{not} to find a difference between $G_1$ and $G_2$,
		it must be the case that $h_k$ is a sender of $e_{j'}$ also in
		$G_1$ and the set of agents that participate in $e_{j}$ and $e_{j'}$
		\emph{together} is the same and they have the same roles.
		That is, every agent that is a receiver in $G_1$ is a receiver in
		$G_2$ and vice versa.
		However, as we assumed that $G_1$ and $G_2$ are different, there
		are again two options:
		\begin{itemize}
			\item Either the order between $e_j$ and $e_{j'}$ in $G_1$ and
			$G_2$ is reversed.
			This matches the difference~\ref{difference 3 cts}, where the
			senders are the agents witnessing the difference.
			\item Or the order between $e_j$ and $e_{j'}$ is the same in both
			$G_1$ and $G_2$.
			Then, the matching between senders and receivers in $G_1$ and
			$G_2$ to $e_j$ and $e_{j'}$ is different.
			Consider a receiver that moved from listening to (wlog) $e_j$ to
			$e_{j'}$.
			It follows that this agent participates in an early communication
			in $G_1$ ($e_j$) and a later communication in $G_2$ ($e_{j'}$).
			This receiving agent and the sender of $e_{j}$ see a different
			order of the communication they participate in (from equal to one
			before the other).
		\end{itemize}
	\end{itemize}

	By induction, unless this process terminates prematurely by finding
	a difference, it will visit all of $G_1$ and $G_2$ and show that
	they are, in fact, equivalent.
\end{appendixproof}

We note that by the proof of Theorem~\ref{thm:language glpo cts} all
the \lpos that disagree only on forced interleavings are refined by the
same \glpo.
\section{Concluding Remarks}\label{sec:conc}
In this paper, we laid down the basis to reason about reconfiguration in concurrent systems from a global perspective. We showed how to isolate forced interleaving decisions of the system due to reconfiguration, and other decisions due to standard concurrent execution of independent events. To test our results, we considered
PTI-nets~\cite{FlynnA73,pnarc} and
CTS~\cite{alrahman2021modelling,DBLP:conf/atal/AlrahmanPP20} which cover a wide range of
interaction capabilities alongside reconfiguration from two different
schools of concurrency.
We proposed, for both, a partial order semantics, named \lpo, of computations under reconfiguration. An \lpo extends occurrence nets~\cite{Vogler02a} with event-to-event connections that allows to refer to reconfiguration points. Moreover, to fully characterise reconfiguration in a single structure, we proposed a glued \lpo semantics, named \glpo. The latter is able to fully isolate scheduling decisions due to reconfiguration from the ones due to standard concurrency.  We show that any $\mathsf{LPO}$ computation is only a refinement of some
g-$\mathsf{LPO}$ of the same system. Finally, we prove important results on
g-$\mathsf{LPO}$ with respect to reconfiguration and
nondeterminism.

For future work, we would like to exploit \glpo semantics to verify
properties about reconfiguration and interaction in general.
Namely, we would like to define a specification logic that considers
\glpo computations as the underlying structure rather than the standard
linear computations.
Clearly, logics over linear structures easily distinguish different
interleavings of the same \lpo.
However, different linearizations of the same \lpo are either all
computations of a system or none of them is.
Similarly, a logic defined over \lpos would easily distinguish
different schedules that relate to reconfiguration.
Again, these different \lpos are either all computations of a system or
none of them is.
By considering \glpos as the underlying structure we can create
specifications that do not distinguish between different schedules that
correspond to the same choices of the system.
Our view is that such a specification language that incorporates
elements of Strategy logic~\cite{ChatterjeeHP10} and
\ltal~\cite{DBLP:conf/atal/AlrahmanPP20} would not only allow us
to reason about interaction and reconfiguration, but also to reason
about the local views of agents as well as their combined behaviour.

\paragraph*{\bf Related works}
The prevalent approach to semantics of reconfigurable interactions is
based on linear order semantics (cf.
Pi-calculus~\cite{milner1992calculus,ene1999expressiveness}, Mobile
Ambients~\cite{luca1}, Applied
Pi-calculus~\cite{appliedpi}, Psi-calculus~\cite{broadcastpsi,psi},
concurrent constraint programming~\cite{ccp,ccpp}, fusion
calculus~\cite{fusion}, the \abc calculus~\cite{info,scp},
\rcp~\cite{DBLP:conf/atal/AlrahmanPP20} etc.).  This semantics cannot
distinguish the different choices of the system from a global
perspective, and thus does not facilitate reasoning about
reconfiguration from an external observer's point of view. It also hides
information about interactions and possible interdependence among
events. In fact, linear order semantics ignores the possible
concurrency of events, which can be important e.g. for judging the
temporal efficiency of the system~\cite{Vogler02a}.
However, it still provides a correct abstraction of the system behaviour, while hiding such details.

Partial order semantics (cf. \emph{Process semantics} of Petri nets~\cite{PetriR08,meseguer1992semantics,Vogler02a}
and \emph{Mazurkiewicz traces} of Zielonka
automata~\cite{Zielonka87,GenestGMW10,KrishnaM13}), on the other hand, is able to refer to the interaction and event dependencies, but does not deal very well with reconfiguration. This is because the latter formalisms have fixed interaction structures, and thus the interdependence of events is defined structurally.
Reconfiguration, on the other hand, enforces reordering of events dynamically in non-trivial ways, and thus makes defining correct partial order semantics very challenging.
As shown in~\cite{JanickiKKM21}, some aspects of concurrency are almost
impossible to tackle in both linear-order and partial-order
causality-based models, and one of them is PTI-nets~\cite{FlynnA73}.
In fact,
reconfiguration increases the expressive power of the formalism, e.g.,
adding inhibitor arcs to Petri nets makes them Turing
Powerful~\cite{agerwala1974complete}. However, this expressive power
does not come without expenses. In fact, it prevents most analysis
techniques for standard Petri nets~\cite{pnarc}.

To the best of our knowledge, the closest to our \lpo semantics is
Relational Structures~\cite{JanickiKKM21}.
In order to capture inhibition they add an additional ``not later
than'' relation to partial orders.
Much like our \lpos, this allows to represent the different forced
interleavings \emph{separately}.
The emphasis in \cite{JanickiKKM21} is on providing a general semantic
framework for concurrent systems.
Thus, relational structures handle issues like priority and error
recovery, which we do not handle.
However, relational structures are not concerted directly with
separation of choice from
interleaving as we are.
So the two works serve different purposes and it would be interesting
to investigate mutual extensions.

%
%
%
%
%
%
%
%

%
%
%
\bibliographystyle{splncs04}
\bibliography{references}

\begin{thebibliography}{10}
\providecommand{\url}[1]{\texttt{#1}}
\providecommand{\urlprefix}{URL }
\providecommand{\doi}[1]{https://doi.org/#1}

\bibitem{appliedpi}
Abadi, M., Blanchet, B., Fournet, C.: The applied pi calculus: Mobile values,
  new names, and secure communication. J. {ACM}  \textbf{65}(1),  1:1--1:41
  (2018). \doi{10.1145/3127586}, \url{https://doi.org/10.1145/3127586}

\bibitem{agerwala1974complete}
Agerwala, T.: A complete model for representing the coordination of
  asynchronous processes. Tech. rep., Johns Hopkins Univ., Baltimore, Md.(USA)
  (1974)

\bibitem{info}
Alrahman, Y.A., Nicola, R.D., Loreti, M.: A calculus for collective-adaptive
  systems and its behavioural theory. Inf. Comput.  \textbf{268} (2019).
  \doi{10.1016/j.ic.2019.104457},
  \url{https://doi.org/10.1016/j.ic.2019.104457}

\bibitem{scp}
Alrahman, Y.A., Nicola, R.D., Loreti, M.: Programming interactions in
  collective adaptive systems by relying on attribute-based communication. Sci.
  Comput. Program.  \textbf{192},  102428 (2020).
  \doi{10.1016/j.scico.2020.102428},
  \url{https://doi.org/10.1016/j.scico.2020.102428}

\bibitem{DBLP:conf/atal/AlrahmanPP20}
Alrahman, Y.A., Perelli, G., Piterman, N.: Reconfigurable interaction for {MAS}
  modelling. In: Proceedings of the 19th International Conference on Autonomous
  Agents and Multiagent Systems, {AAMAS} '20, Auckland, New Zealand, May 9-13,
  2020. pp. 7--15. International Foundation for Autonomous Agents and
  Multiagent Systems (2020)

\bibitem{alrahman2021modelling}
Alrahman, Y.A., Piterman, N.: Modelling and verification of reconfigurable
  multi-agent systems (2021)

\bibitem{BaetenB01}
Baeten, J.C.M., Basten, T.: Partial-order process algebra (and its relation to
  petri nets). In: Bergstra, J.A., Ponse, A., Smolka, S.A. (eds.) Handbook of
  Process Algebra, pp. 769--872. North-Holland / Elsevier (2001).
  \doi{10.1016/b978-044482830-9/50031-x},
  \url{https://doi.org/10.1016/b978-044482830-9/50031-x}

\bibitem{psi}
Bengtson, J., Johansson, M., Parrow, J., Victor, B.: Psi-calculi: a framework
  for mobile processes with nominal data and logic. Logical Methods in Computer
  Science  \textbf{7}(1) (2011). \doi{10.2168/LMCS-7(1:11)2011},
  \url{https://doi.org/10.2168/LMCS-7(1:11)2011}

\bibitem{BestD90}
Best, E., Desel, J.: Partial order behaviour and structure of petri nets.
  Formal Aspects Comput.  \textbf{2}(2),  123--138 (1990).
  \doi{10.1007/BF01888220}, \url{https://doi.org/10.1007/BF01888220}

\bibitem{observ}
Boreale, M., Nicola, R.D., Pugliese, R.: Basic observables for processes. Inf.
  Comput.  \textbf{149}(1),  77--98 (1999). \doi{10.1006/inco.1998.2755},
  \url{https://doi.org/10.1006/inco.1998.2755}

\bibitem{broadcastpsi}
Borgstr{\"{o}}m, J., Huang, S., Johansson, M., Raabjerg, P., Victor, B.,
  Pohjola, J.{\AA}., Parrow, J.: Broadcast psi-calculi with an application to
  wireless protocols. Software and System Modeling  \textbf{14}(1),  201--216
  (2015). \doi{10.1007/s10270-013-0375-z},
  \url{https://doi.org/10.1007/s10270-013-0375-z}

\bibitem{pnarc}
Busi, N.: Analysis issues in petri nets with inhibitor arcs. Theor. Comput.
  Sci.  \textbf{275}(1-2),  127--177 (2002).
  \doi{10.1016/S0304-3975(01)00127-X},
  \url{https://doi.org/10.1016/S0304-3975(01)00127-X}

\bibitem{luca1}
Cardelli, L., Gordon, A.D.: Mobile ambients. Electr. Notes Theor. Comput. Sci.
  \textbf{10},  198--201 (1997). \doi{10.1016/S1571-0661(05)80699-1}

\bibitem{ChatterjeeHP10}
Chatterjee, K., Henzinger, T.A., Piterman, N.: Strategy logic. Inf. Comput.
  \textbf{208}(6),  677--693 (2010). \doi{10.1016/j.ic.2009.07.004},
  \url{https://doi.org/10.1016/j.ic.2009.07.004}

\bibitem{ene1999expressiveness}
Ene, C., Muntean, T.: Expressiveness of point-to-point versus broadcast
  communications. In: Fundamentals of Computation Theory. pp. 258--268.
  Springer (1999)

\bibitem{FlynnA73}
Flynn, M.J., Agerwala, T.: Comments on capabilities, limitations and
  correctness of petri nets. In: Lipovski, G.J., Szygenda, S.A. (eds.)
  Proceedings of the 1st Annual Symposium on Computer Architecture,
  Gainesville, FL, USA, December 1973. pp. 81--86. {ACM} (1973).
  \doi{10.1145/800123.803973}, \url{https://doi.org/10.1145/800123.803973}

\bibitem{GenestGMW10}
Genest, B., Gimbert, H., Muscholl, A., Walukiewicz, I.: Optimal zielonka-type
  construction of deterministic asynchronous automata. In: Abramsky, S.,
  Gavoille, C., Kirchner, C., auf~der Heide, F.M., Spirakis, P.G. (eds.)
  Automata, Languages and Programming, 37th International Colloquium, {ICALP}
  2010, Bordeaux, France, July 6-10, 2010, Proceedings, Part {II}. Lecture
  Notes in Computer Science, vol.~6199, pp. 52--63. Springer (2010).
  \doi{10.1007/978-3-642-14162-1\_5},
  \url{https://doi.org/10.1007/978-3-642-14162-1\_5}

\bibitem{ccpp}
Gilbert, D.R., Palamidessi, C.: Concurrent constraint programming with process
  mobility. In: Computational Logic - {CL} 2000, First International
  Conference, London, UK, 24-28 July, 2000, Proceedings. pp. 463--477 (2000).
  \doi{10.1007/3-540-44957-4\_31},
  \url{https://doi.org/10.1007/3-540-44957-4\_31}

\bibitem{JanickiKKM21}
Janicki, R., Kleijn, J., Koutny, M., Mikulski, L.: Relational structures for
  concurrent behaviours. Theor. Comput. Sci.  \textbf{862},  174--192 (2021).
  \doi{10.1016/j.tcs.2020.10.019},
  \url{https://doi.org/10.1016/j.tcs.2020.10.019}

\bibitem{KrishnaM13}
Krishna, S., Muscholl, A.: A quadratic construction for zielonka automata with
  acyclic communication structure. Theor. Comput. Sci.  \textbf{503},  109--114
  (2013). \doi{10.1016/j.tcs.2013.07.015},
  \url{https://doi.org/10.1016/j.tcs.2013.07.015}

\bibitem{meseguer1992semantics}
Meseguer, J., Montanari, U., Sassone, V.: On the semantics of petri nets. In:
  International Conference on Concurrency Theory. pp. 286--301. Springer (1992)

\bibitem{pi1}
Milner, R., Parrow, J., Walker, D.: A calculus of mobile processes, {I}. Inf.
  Comput.  \textbf{100}(1),  1--40 (1992). \doi{10.1016/0890-5401(92)90008-4}

\bibitem{milner1992calculus}
Milner, R., Parrow, J., Walker, D.: A calculus of mobile processes, ii.
  Information and computation  \textbf{100}(1),  41--77 (1992)

\bibitem{PetriR08}
Petri, C.A., Reisig, W.: Petri net. Scholarpedia  \textbf{3}(4), ~6477 (2008).
  \doi{10.4249/scholarpedia.6477},
  \url{https://doi.org/10.4249/scholarpedia.6477}

\bibitem{ccp}
Saraswat, V.A., Rinard, M.C.: Concurrent constraint programming. In: Conference
  Record of the Seventeenth Annual {ACM} Symposium on Principles of Programming
  Languages, San Francisco, California, USA, January 1990. pp. 232--245 (1990).
  \doi{10.1145/96709.96733}, \url{https://doi.org/10.1145/96709.96733}

\bibitem{synth}
Stefanescu, A., Esparza, J., Muscholl, A.: Synthesis of distributed algorithms
  using asynchronous automata. In: Amadio, R.M., Lugiez, D. (eds.) {CONCUR}
  2003 - Concurrency Theory, 14th International Conference, Marseille, France,
  September 3-5, 2003, Proceedings. Lecture Notes in Computer Science,
  vol.~2761, pp. 27--41. Springer (2003). \doi{10.1007/978-3-540-45187-7\_2},
  \url{https://doi.org/10.1007/978-3-540-45187-7\_2}

\bibitem{Vogler02a}
Vogler, W.: Partial order semantics and read arcs. Theor. Comput. Sci.
  \textbf{286}(1),  33--63 (2002). \doi{10.1016/S0304-3975(01)00234-1},
  \url{https://doi.org/10.1016/S0304-3975(01)00234-1}

\bibitem{fusion}
Wischik, L., Gardner, P.: Explicit fusions. Theor. Comput. Sci.
  \textbf{340}(3),  606--630 (2005). \doi{10.1016/j.tcs.2005.03.017},
  \url{https://doi.org/10.1016/j.tcs.2005.03.017}

\bibitem{Zielonka87}
Zielonka, W.: Notes on finite asynchronous automata. {RAIRO} Theor. Informatics
  Appl.  \textbf{21}(2),  99--135 (1987)

\end{thebibliography}

\appendix

\end{document}